
\documentclass[11pt]{article}
\usepackage{amsfonts}
\usepackage{amsmath,amsthm}

\usepackage{mathrsfs}
\usepackage{fancyhdr}
\usepackage{latexsym}
\usepackage{amssymb}
\usepackage{color}
\usepackage{cite}
\usepackage{hyperref}
\textheight=23.5cm \textwidth=16cm

\topmargin=-1cm \oddsidemargin=0cm \evensidemargin=0cm

\newtheorem{Lemma}{Lemma}[section]

\newtheorem{Proposition}{Proposition}[section]
\numberwithin{equation}{section}

\title{On the Lattice Potential KP Equation}

\author{Cewen Cao$^1$, ~ Xiaoxue Xu$^1$, ~ Da-jun Zhang$^2$\footnote{Corresponding author. E-mail: djzhang@staff.shu.edu.cn}
\\
{\small $^1$School of Mathematics and Statistics, Zhengzhou University, Zhengzhou Henan 450001, P.R. China} \\
{\small $^2$Department of Mathematics, Shanghai University, Shanghai 200444, P.R. China}\\
{\small E-mail:\ \ cwcao@zzu.edu.cn, xiaoxuexu@zzu.edu.cn, djzhang@staff.shu.edu.cn}
}

\begin{document}

\maketitle

\begin{abstract}
The paper presents an approach to derive finite genus solutions to
the lattice potential Kadomtsev-Petviashvili (lpKP) equation introduced by F.W. Nijhoff, et al.
This equation is rederived from compatible conditions of three replicas of the discrete ZS-AKNS spectral problem,
which is a Darboux transformation of the continuous ZS-AKNS spectral problem.
With the help of these links and by means of the so called nonlinearization technique and Liouville platform,
finite genus solutions of the lpKP equation are derived.
Semi-discrete potential KP equations with one and two discrete arguments, respectively, are also discussed.

\vskip 6pt
\noindent{\textbf{Keywords:}}  lattice potential Kadomtsev-Petviashvili equation, finite genus solutions,
nonlinearization, ZS-AKNS spectral problem, Liouville platform \\
\noindent{\textbf{PACS:}} 02.30.Ik, 02.30.Jr, 05.45.Yv
\end{abstract}

\section{Introduction }\label{sec-1}

Discrete integrable systems and the problem of integrable discretization of given soliton equations
have attracted more and more attention in recent years \cite{12-GraKT-Spr-2004,13-HieJN-Cam-2016,19-Sur-Bir-2003}.
The main purpose of this paper is to investigate the lattice
potential Kadomtsev-Petviashvili (lpKP) equation
\begin{align}
\Xi_{(\beta_1,\beta_2,\beta_3)}^{\text{lpKP}}\equiv\, & (\beta_1-\tilde{W})(\beta_2-\beta_3+\hat{\tilde{W}}-\bar{\tilde{W}})
+(\beta_2-\bar{W})(\beta_3-\beta_1+\tilde{\bar{W}}-\hat{\bar{W}}) \nonumber\\
& +(\beta_3-\hat{W})(\beta_1-\beta_2+\bar{\hat{W}}-\tilde{\hat{W}})=0,\label{1.1}
\end{align}
and present an approach to construct finite genus solutions to 3D integrable lattice equations.
This equation is first discovered by Nijhoff, Capel, Wiersma and Quispel  by using the
B\"{a}cklund transformation approach, and later derived through an
analysis of the Cauchy matrix \cite{17-NijCWQ-PLA-1984,13-HieJN-Cam-2016}.

To build the integrability of the lpKP equation \eqref{1.1} and calculate its finite genus solutions,
we will introduce Lax triads from the ZS-AKNS spectral problem.
Compatibility of these triads, respectively, give rise to  the lattice potential KP equations with 3, 2 and 1 discrete arguments,
as (see Section \ref{sec-2} for the derivation)
%
\begin{align}
\Xi^{(0,3)}\equiv
&\frac{1}{2}[(\tilde{W}-\bar{W})\bar{\tilde{W}}
+(\bar{W}-\hat{W})\hat{\bar{W}}+(\hat{W}-\tilde{W})\tilde{\hat{W}}]\nonumber\\
&+\gamma_1(\bar{\tilde{W}}-\hat{\tilde{W}}-\bar{W}+\hat{W})
+\gamma_2(\hat{\bar{W}}-\tilde{\bar{W}}-\hat{W}+\tilde{W})\nonumber\\
&+\gamma_3(\tilde{\hat{W}}-\bar{\hat{W}}-\tilde{W}+\bar{W})=0,\label{1.11}
\end{align}
\begin{equation}
\Xi^{(1,2)}\equiv(\tilde{W}-\bar{W})_x-[\frac{1}{2}(\tilde{W}-\bar{W})+\gamma_1-\gamma_2]
(\bar{\tilde{W}}-\tilde{W}-\bar{W}+W)=0,
\label{1.12}
\end{equation}
\begin{equation}
\Xi^{(2,1)}\equiv(\tilde{W}-W)_y-[(\tilde{W}+W)_x+2\gamma_1(\tilde{W}-W)+\frac{1}{2}(\tilde{W}-W)^2]_x=0.
\label{1.13}
\end{equation}
Note that (\ref{1.11}) is equivalent to the lpKP equation (\ref{1.1}) with
$\beta_k=-2\gamma_k,\, k=1,2,3$, as
$
\Xi_{(-2\gamma_1,-2\gamma_2,-2\gamma_3)}^{\text{lpKP}}=2\Xi^{(0,3)}$.
It also turns out that  all these equations have the same potential KP (pKP) equation,
\begin{equation}
\Xi^{(3,0)}\equiv
W_{xt}-\frac{1}{4}(W_{xxx}+3W^2_x)_x-\frac{3}{4}W_{yy}=0 ,
\label{1.3}
\end{equation}
as their continuum limits (see Proposition \ref{P-2.5}).

The method of finite-gap integration originated in solving the periodic initial problem
of the Korteweg-de Vries (KdV) equation (cf.\cite{Mat-RTRSA-2008} and the references therein).
Recently, an approach to deriving finite genus solutions for 2D discrete integrable systems,
the lattice potential KdV  equation \cite{7-CaoX-JPA-2012} and
the lattice nonlinear Schr\"{o}dinger (lNLS)  model \cite{8-CaoZ-JPA-2012}  were developed.
In this paper, just as in the 2D case, explicit analytic solutions of the lattice pKP equations (\ref{1.11},\ref{1.12},\ref{1.13}),
together with the pKP equation \eqref{1.3},
will be calculated by means of  the finite-dimensional integrable flows
of continuous and discrete types, i.e. Hamiltonian phase flows and integrable symplectic maps.
These flows are constructed through nonlinearization of the continuous and discrete spectral problems
 (see Section \ref{sec-3},\ref{sec-4}). It is surprising
that they share same Liouville integrals, same Lax matrix
$L(\lambda;p,q)$ and  same algebraic curve $\mathcal{R}$.
Thus the calculations can be done on the same Liouville platform.
The Abel-Jacobi variable $\vec{\phi}$ in the Jacobian variety
$J(\mathcal{R})$ straightens out both the $H_j$- and the
$S_{\gamma_k}$-flow with the velocities $\vec{\Omega}_j$ and
$\vec{\Omega}_{\gamma_k}$, respectively.
As a result, we have a clear
evolution picture for the lattice pKP equations as well as for the pKP equation, as the following,
\begin{align*}
&\Xi^{(0,3)}:\quad \vec{\phi}\equiv \vec{\phi}_0+m_1\vec{\Omega}_{\gamma_1}+m_2\vec{\Omega}_{\gamma_2}+m_3\vec{\Omega}_{\gamma_3},\quad (\text{mod}\mathscr T), \\
&\Xi^{(1,2)}:\quad \vec{\phi}\equiv \vec{\phi}_0+x\vec{\Omega}_1+m_1\vec{\Omega}_{\gamma_1}+m_2\vec{\Omega}_{\gamma_2},
\quad (\text{mod}\mathscr T), \\
&\Xi^{(2,1)}:\quad \vec{\phi}\equiv \vec{\phi}_0+x\vec{\Omega}_1+y\vec{\Omega}_2+m_1\vec{\Omega}_{\gamma_1},
\quad (\text{mod}\mathscr T), \\
&\Xi^{(3,0)}:\quad \vec{\phi}\equiv
\vec{\phi}_0+x\vec{\Omega}_1+y\vec{\Omega}_2+t\vec{\Omega}_3,\quad(\text{mod}\mathscr T),
\end{align*}
which will  provide the basic part of the explicit analytic solutions (see
Section \ref{sec-5},\ref{sec-6}).

The paper is organized as follows.
Section \ref{sec-2} shows that how the lattice pKP equations  (\ref{1.11}), (\ref{1.12}) and (\ref{1.13})
arise from their Lax triads. Continuum limits of these lattice pKP equations
give rise to the same pKP equation.
In Section \ref{sec-3}, a finite-dimensional integrable Hamiltonian system related to the  ZS-AKNS spectral problem
 is introduced to provide  integrals,  spectral curve and Abel-Jacobi variables.
In Section \ref{sec-4}  we construct an integrable symplectic map $S_{\gamma}$
in tilde direction, develop an algebro-difference analogue of the Burchnall-Chaundy's theory on
commuting differential operators  by which we express the potential functions  in terms of theta function.
This allows us to derive finite genus solutions  for the lpKP equation in Section 5
and for other two semi-discrete and one continuous pKP equations in Section \ref{sec-6}.
Finally,  concluding remarks are given in the last Section.

\section{ The discretized pKP equations}\label{sec-2}

\subsection{The KP equation}\label{sec-2-1}

In order to find the suitable discrete spectral problems for \eqref{1.1},
let us first recall the usual continuous KP equation,
\begin{equation}
w_t=\frac{1}{4}(w_{xx}+3w^2)_x+\frac{3}{4}\partial^{-1}_xw_{yy}.
\label{1.2}
\end{equation}
It is well-known that the KP equation  has a close relation with the
ZS-AKNS spectral problem $(U_1)$ \cite{6-CaoWG-JMP-1999,16-KonSS-PLA-1991},
\begin{equation}
\partial_x\chi=U_1\chi=\left( \begin{array}{cc}
\lambda/2 &u\\
v &-\lambda/2
\end{array} \right)\chi.
\label{1.4}
\end{equation}
In fact, there is  a hierarchy of isospectral equations $(X_k)$ related to \eqref{1.4},
\begin{equation}
\partial_{\tau_k}(u,v)=X_k,\quad(k=2,3,\cdots),
\label{1.5}
\end{equation}
in which  the first two nonlinear members, ($y=\tau_2,t=\tau_3$),
the NLS equation and the modified KdV (mKdV) equation, respectively,
are
\begin{subequations}\label{1.6}
\begin{align}
&\partial_y(u,v)=X_2=(u_{xx}-2u^2v,\,-v_{xx}+2uv^2),\label{1.6a}\\
&\partial_t(u,v)=X_3=(u_{xxx}-6uvu_x,\,v_{xxx}-6uvv_x).\label{1.6b}
\end{align}
\end{subequations}
Corresponding to the hierarchy \eqref{1.5}, there exist a series of linear spectral problems ($U_k$),
\begin{equation}
\partial_{\tau_k}\chi=U_k\chi,\quad(k=1,2,\cdots),
\label{1.7}
\end{equation}
where, apart from equation (\ref{1.4})$|_{x=\tau_1}$, we also have (with $y=\tau_2,~t=\tau_3$)
\begin{subequations}\label{1.8}
\begin{align}
&\partial_y\chi=U_2\chi=\left( \begin{array}{cc}
\lambda^2/2-uv &\lambda u+u_x\\
\lambda v-v_x &-\lambda^2/2+uv
\end{array} \right)\chi,\label{1.8a}\\
&\partial_t\chi=U_3\chi=\left( \begin{array}{cc}
\lambda^3/2-\lambda uv-u_xv+uv_x &\lambda^2u+\lambda u_x+u_{xx}-2u^2v\\
\lambda^2v-\lambda v_x+v_{xx}-2uv^2 &-\lambda^3/2+\lambda
uv+u_xv-uv_x
\end{array} \right)\chi.\label{1.8b}
\end{align}
\end{subequations}
The Lax pair for $(X_k)$ is composed of $(U_1)$ and $(U_k)$. It is
found that if $(u,v)$ is a compatible solution of $(X_2)$ and
$(X_3)$, then $w=-2uv$ solves the KP equation (\ref{1.2}) \cite{6-CaoWG-JMP-1999,16-KonSS-PLA-1991}.
Thus the compatible conditions of $(U_1)$, $(U_2)$ and $(U_3)$ implies the KP equation. In
other words, $(U_1, U_2, U_3)$ is the Lax triad for the KP equation  and hence for the pKP equation via $w=W_x$.

\subsection{The discrete pKP equations}\label{sec-2-2}

The above facts of the KP equation lead us to consider discretization of the ZS-AKNS problem (\ref{1.4}),
by which we hope to find the Lax representation for the lpKP equation \eqref{1.1}.
One  discretization of \eqref{1.4} is known as the  Ablowitz-Ladik spectral problem \cite{1-AblL-JMP-1975,AblL-JMP-1976},
which leads to a spatially discretized  NLS equation.
In this paper, we  employ the following linear problem, $(D^{(\gamma)})$, adopted in \cite{8-CaoZ-JPA-2012},
\begin{equation}
\tilde\chi=D^{(\gamma)}\chi,\quad
D^{(\gamma)}(\lambda,a,b)=\left(\begin{array}{cc}\lambda-\gamma+ab
&a\\b&1
\end{array}\right),
\label{1.9}
\end{equation}
which provides a second discretzation for (\ref{1.4}) \cite{8a-MerRT-IP-1994} but is different from  Ablowitz-Ladik's  spectral problem (cf.\cite{8b-CheDZ-JNMP-2017}).
Note that \eqref{1.9} is also known as a Darboux transformation of the  ZS-AKNS spectral problems (\ref{1.7})
\cite{AdlY-JPA-1994}.
Here, for the above notation, let $T_1$  be shift operator along the $m_1$  direction, defined for
any function $f:\mathbb{Z}^3\rightarrow\mathbb{R}$ as
\begin{equation*}
(T_1f)(m_1,m_2,m_3)=\tilde{f}(m_1,m_2,m_3)=f(m_1+1,m_2,m_3).
\end{equation*}
Similarly, $T_2f=\bar{f}$, $T_3f=\hat{f}$ are shifts along the $m_2$ and $m_3$ direction, respectively.

Two basic relations,
\begin{equation}
(a,b)=(u,\tilde{v}),\quad(u\tilde{v})_x=\tilde{u}\tilde{v}-uv,
\label{2.1}
\end{equation}
are derived from the compatibility condition of equations (\ref{1.4}) and (\ref{1.9}) (see \cite{8-CaoZ-JPA-2012}).
The former bridges their potential functions, while the latter suggests the setting of difference relation
$\tilde{W}-W=-2u\tilde{v}$ as $W_x=w=-2uv$.
These facts lead to the consideration of three replicas of equation (\ref{1.9}) with distinct
non-zero parameters $\gamma=\gamma_1,\,\gamma_2,\,\gamma_3$,
\begin{subequations}\label{2.2}
\begin{align}
&T_1\chi\equiv\tilde\chi=D^{(\gamma_1)}(\lambda,u,\tilde{v})\chi,\label{2.2a}\\
&T_2\chi\equiv\bar\chi=D^{(\gamma_2)}(\lambda,u,\bar{v})\chi,\label{2.2b}\\
&T_3\chi\equiv\hat\chi=D^{(\gamma_3)}(\lambda,u,\hat{v})\chi,\label{2.2c}
\end{align}
\end{subequations}
which are denoted by $(D^{(\gamma_k)})$, $k=1,2,3$, respectively, for short.
Besides, auxiliary equations will be assigned  for each special occasion from the following list,
\begin{subequations}\label{2.3}
\begin{align}
&\tilde{W}-W=-2u\tilde{v},\label{2.3a}\\
&\bar{W}-W=-2u\bar{v},\label{2.3b}\\
&\hat{W}-W=-2u\hat{v},\label{2.3c}\\
&\partial_xW=-2uv.\label{2.3d}
\end{align}\end{subequations}
At first glance, these equations seem fairly hard to deal with.
Here we remark that on the platform of Liouville integrability, a pair of functions
$(u,v)$ of discrete arguments $m_1,\,m_2,\,m_3$ can be constructed,
which are finite genus potential for each of the discrete spectral
problems (\ref{2.2}a,b,c); and  $W$ can be solved with the help
of the theta function and meromorphic differentials on the associated Riemann surface.
This will lead to  explicit analytic solutions to the lpKP equation (see Section \ref{sec-5}).
The approach can also be extended to the semi-discrete and purely continuous cases (see Section \ref{sec-6}).

To derive the discrete pKP equations \eqref{1.11}, \eqref{1.12} and \eqref{1.13},
we replace  $(U_j)$ in the Lax triad $(U_1,U_2,U_3)$ successively by
$(D^{(\gamma_k)})$, and then we come to the following new Lax triads,
\begin{equation}
(D^{(\gamma_1)},D^{(\gamma_2)},D^{(\gamma_3)}),\,
(U_1,D^{(\gamma_1)},D^{(\gamma_2)}),\,
(U_1,U_2,D^{(\gamma_1)}).
\label{1.10}
\end{equation}
With the auxiliary relations (\ref{2.2}), the compatibility of these triads lead to the discrete pKP
equations \eqref{1.11}, \eqref{1.12} and \eqref{1.13}.
We present the procedure of derivation via the following lemmas and propositions.

\begin{Lemma}\label{L-2.1}
Let $(u,v):\mathbb{Z}^2\rightarrow\mathbb{R}^2$ be a pair of functions such that
$(i)$ equations (\ref{2.2}a,b) have compatible solution $\chi$ for one value of the spectral parameter $\lambda$;
$(ii)$ the system (\ref{2.3}a,b) has a solution  $W$.
Then $(u,v)$ solve  the lNLS equation \cite{8-CaoZ-JPA-2012,15}
\begin{subequations}\label{2.4}
\begin{align}
&\Xi_1^{(0,2)}\equiv(\tilde{u}-\bar{u})(u\bar{\tilde{v}}+1)+(\gamma_1-\gamma_2)u=0,\label{2.4a}\\
&\Xi_2^{(0,2)}\equiv(\tilde{v}-\bar{v})(u\bar{\tilde{v}}+1)-(\gamma_1-\gamma_2)\bar{\tilde{v}}=0,\label{2.4b}
\end{align}
\end{subequations}
and $W,\,u,\,v\,$ satisfy the relation
\begin{align}
Y(\gamma_1,\gamma_2)&\equiv
2 \Bigl(\tilde{v}\,\Xi_1^{(0,2)}(\gamma_1,\gamma_2)+\bar{u}\,\Xi_2^{(0,2)}(\gamma_1,\gamma_2) \Bigr)\nonumber \\
&=\Bigl[\frac{1}{2}(\tilde{W}-\bar{W})+\gamma_1-\gamma_2\Bigr]
(\bar{\tilde{W}}-\tilde{W}-\bar{W}+W)+2(\tilde{u}\tilde{v}-\bar{u}\bar{v}),\label{2.5}
\end{align}
which is equal to zero due to equations (\ref{2.4}).
\end{Lemma}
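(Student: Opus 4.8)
Everything will follow from the compatibility of the two discrete spectral problems \eqref{2.2a} and \eqref{2.2b}. First I would compute $\bar{\tilde\chi}=\tilde{\bar\chi}$ in the two admissible ways: shifting \eqref{2.2a} by $T_2$ gives $\bar{\tilde\chi}=D^{(\gamma_1)}(\lambda,\bar u,\bar{\tilde v})\,D^{(\gamma_2)}(\lambda,u,\bar v)\,\chi$, and shifting \eqref{2.2b} by $T_1$ gives $\tilde{\bar\chi}=D^{(\gamma_2)}(\lambda,\tilde u,\tilde{\bar v})\,D^{(\gamma_1)}(\lambda,u,\tilde v)\,\chi$. Hypothesis $(i)$ thus says $B(\lambda)\chi=0$ for one value of $\lambda$, where
\[
B(\lambda)=D^{(\gamma_2)}(\lambda,\tilde u,\tilde{\bar v})\,D^{(\gamma_1)}(\lambda,u,\tilde v)-D^{(\gamma_1)}(\lambda,\bar u,\bar{\tilde v})\,D^{(\gamma_2)}(\lambda,u,\bar v).
\]
Since each $D^{(\gamma)}(\lambda,a,b)$ is affine in $\lambda$ with the same rank-one leading matrix independent of $(\gamma,a,b)$, the two products have the same $\lambda^2$-term, so $B(\lambda)$ is at most affine in $\lambda$; its $\lambda$-coefficient reduces to the scalar $u\tilde v+\tilde u\tilde{\bar v}-u\bar v-\bar u\bar{\tilde v}$ times the matrix unit in the $(1,1)$-slot. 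Invoking hypothesis $(ii)$: substituting $u\tilde v=\tfrac12(W-\tilde W)$ and $u\bar v=\tfrac12(W-\bar W)$ from \eqref{2.3a}, \eqref{2.3b} together with their $T_2$- and $T_1$-shifts, and using $\bar{\tilde W}=\tilde{\bar W}$, this scalar collapses to $\tfrac12(\bar{\tilde W}-\tilde{\bar W})=0$. Hence $B$ does not depend on $\lambda$ at all --- which is exactly why one value of $\lambda$ is enough in $(i)$.

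Secondly I would read off the four entries of the constant matrix $B$ by direct multiplication: its off-diagonal entries turn out to be $\Xi_1^{(0,2)}$ (position $(1,2)$) and $\Xi_2^{(0,2)}$ (position $(2,1)$); its $(2,2)$-entry is $u\bar{\tilde v}-u\tilde{\bar v}=0$; and its $(1,1)$-entry, after inserting \eqref{2.3a}, \eqref{2.3b} once more (equivalently, the relation $u(\bar v-\tilde v)=\bar{\tilde v}(\tilde u-\bar u)$ which they force through $\bar{\tilde W}=\tilde{\bar W}$), simplifies to $\tilde v\,\Xi_1^{(0,2)}+\bar u\,\Xi_2^{(0,2)}=\tfrac12\,Y(\gamma_1,\gamma_2)$, so that
\[
B=\begin{pmatrix}\tfrac12\,Y(\gamma_1,\gamma_2) & \Xi_1^{(0,2)}\\ \Xi_2^{(0,2)} & 0\end{pmatrix}.
\]
From $B\chi=0$ with $\chi\neq0$ one gets $\det B=-\Xi_1^{(0,2)}\Xi_2^{(0,2)}=0$; reading the two scalar rows of $B\chi=0$ together with the non-degeneracy of $\chi$ (whenever $u\not\equiv0$, neither the first component of $\chi$ nor that of $\tilde\chi$ can vanish identically; alternatively, if $\chi$ is admitted with free initial data then compatibility is the matrix identity $B=0$) forces $\Xi_1^{(0,2)}=\Xi_2^{(0,2)}=0$, which are precisely the lNLS equations \eqref{2.4a}, \eqref{2.4b}.

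It remains to prove the identity \eqref{2.5}. Its first line is just the definition $Y:=2(\tilde v\,\Xi_1^{(0,2)}+\bar u\,\Xi_2^{(0,2)})$, which is twice the $(1,1)$-entry of $B$ found above. For the second line I would expand $\tilde v\,\Xi_1^{(0,2)}+\bar u\,\Xi_2^{(0,2)}$, collect the common factor $u\bar{\tilde v}+1$ to get $(u\bar{\tilde v}+1)(\tilde u\tilde v-\bar u\bar v)+(\gamma_1-\gamma_2)(u\tilde v-\bar u\bar{\tilde v})$, then in the right-hand side of \eqref{2.5} replace $\tfrac12(\tilde W-\bar W)$ by $u(\bar v-\tilde v)$ and $\bar{\tilde W}-\tilde W-\bar W+W$ by $2(u\tilde v-\bar u\bar{\tilde v})$ via \eqref{2.3a}, \eqref{2.3b} and their shifts; the two sides then reorganize into the same polynomial, the residual matching reducing once more to $u(\bar v-\tilde v)=\bar{\tilde v}(\tilde u-\bar u)$. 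Since \eqref{2.4} has now been established, $Y(\gamma_1,\gamma_2)=0$.

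The one conceptually delicate point is the last inference of the second step --- passing from a compatible $\chi$ at a single $\lambda$ to the vanishing of \emph{both} $\Xi_1^{(0,2)}$ and $\Xi_2^{(0,2)}$ rather than only of their product. This needs either the matrix reading $B=0$ of compatibility, or the stated non-degeneracy of the components of $\chi$, which for the nonlinearized solutions of Sections \ref{sec-3} and \ref{sec-4} is automatic. Everything else is bookkeeping with \eqref{2.3a}, \eqref{2.3b}; one need only observe that these $W$-substitutions are legitimate because the two shifts commute, and that the polynomial identities behind the first two steps hold without the occasional division by $u$ or $\bar u$ that is convenient for spotting them.
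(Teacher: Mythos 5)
Your proposal is correct and follows essentially the same route as the paper: both compute the cross-commutator matrix $\tilde{D}^{(\gamma_2)}D^{(\gamma_1)}-\bar{D}^{(\gamma_1)}D^{(\gamma_2)}$, identify its off-diagonal entries with $\Xi_1^{(0,2)},\Xi_2^{(0,2)}$, extract $\Xi_1^{(0,2)}=\Xi_2^{(0,2)}=0$ from $B\chi=0$ under the same (explicitly flagged in your case, "usually" in the paper's) non-degeneracy of $\chi$, and then verify \eqref{2.5} by substituting \eqref{2.3a}, \eqref{2.3b} and their shifts. The only cosmetic difference is that the paper writes the $(1,1)$-entry as a $\lambda$-dependent linear combination of $\Xi_1^{(0,2)},\Xi_2^{(0,2)}$ without invoking hypothesis $(ii)$, whereas you use the $W$-relations to kill its $\lambda$-dependence and identify it with $Y/2$ directly; both are valid under the lemma's hypotheses.
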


\begin{proof}
By direct calculations we see that the
cross action $(T_1T_2-T_2T_1)\chi$ is equal to
\begin{equation}
(\tilde{D}^{(\gamma_2)}D^{(\gamma_1)}-\bar{D}^{(\gamma_1)}D^{(\gamma_2)})\chi=
\left(\begin{array}{cc} \Upsilon_{11}&\Xi^{(0,2)}_1\\\Xi^{(0,2)}_2&0
\end{array}\right)
{\chi^{(1)} \choose \chi^{(2)}},
\label{2.6}
\end{equation}
where
\begin{align*}
\Upsilon_{11}=&\frac{\lambda}{\gamma_1-\gamma_2}[(\tilde{v}-\bar{v})\Xi^{(0,2)}_1-(\tilde{u}-\bar{u})\Xi^{(0,2)}_2]\\
&+\frac{1}{\gamma_1-\gamma_2}[(\gamma_1\bar{v}-\gamma_2\tilde{v})\Xi^{(0,2)}_1
-(\gamma_2\bar{u}-\gamma_1\tilde{u})\Xi^{(0,2)}_2].
\end{align*}
With (\ref{2.2}a,b) one can rewrite $(T_1T_2-T_2T_1)\chi$ in the form
\begin{equation}
\frac{1}{\gamma_1-\gamma_2}\left(\begin{array}{cc}
(\gamma_1-\lambda)\bar{\chi}^{(2)}-(\gamma_2-\lambda)\tilde{\chi}^{(2)}
&[(\gamma_1-\lambda)\tilde{u}-(\gamma_2-\lambda)\bar{u}]\chi^{(1)}\\
0&(\gamma_1-\gamma_2)\chi^{(1)}
\end{array}\right){\Xi^{(0,2)}_1 \choose \Xi^{(0,2)}_2 }.
\label{2.7}
\end{equation}
Usually the coefficient determinant is not zero. Since
$(T_1T_2-T_2T_1)\chi=0$, we have $\Xi^{(0,i)}=0$.
Further, in light of equations (\ref{2.3}a,b), the left-hand side of equation (\ref{2.4}) can be written as
\begin{align*}
&\Xi^{(0,2)}_1=(\tilde{u}-\bar{u})+[\frac{1}{2}(\tilde{W}-\bar{W})+\gamma_1-\gamma_2]u,\\
&\Xi^{(0,2)}_2=(\tilde{v}-\bar{v})-[\frac{1}{2}(\tilde{W}-\bar{W})+\gamma_1-\gamma_2]\bar{\tilde{v}},
\end{align*}
which imply equation (\ref{2.5}) by direct calculations.
\end{proof}

\begin{Proposition}\label{P-2.2}
Let
$(u,v):\mathbb{Z}^3\rightarrow\mathbb{R}^2$  be a pair of functions such that
$(i)$ equations (\ref{2.2}a,b,c) have compatible solution $\chi$  for one value of $\lambda$;
$(ii)$ the system (\ref{2.3}a,b,c) has a solution $W$.
Then $W$ solves equation (\ref{1.11}), i.e. $\Xi^{(0,3)}=0$.
\end{Proposition}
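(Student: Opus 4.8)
The plan is to run the same cross-derivative computation as in Lemma~\ref{L-2.1}, but now applied to all three pairs of shift operators $(T_1,T_2)$, $(T_2,T_3)$, $(T_3,T_1)$ simultaneously, and then to combine the resulting lNLS-type identities. First I would invoke Lemma~\ref{L-2.1} three times, once for each ordered pair of indices: since (\ref{2.2}a,b,c) are pairwise compatible for the chosen $\lambda$, and since (\ref{2.3}a,b,c) provide a single consistent $W$, each two-index subsystem satisfies the hypotheses of Lemma~\ref{L-2.1}. Hence $\Xi_1^{(0,2)}(\gamma_1,\gamma_2)=\Xi_2^{(0,2)}(\gamma_1,\gamma_2)=0$ together with the relation (\ref{2.5}), i.e. $Y(\gamma_1,\gamma_2)=0$, and likewise $Y(\gamma_2,\gamma_3)=0$ and $Y(\gamma_3,\gamma_1)=0$, where throughout the roles of the plain, tilde, bar and hat shifts are cyclically permuted according to which two directions are involved.

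The key algebraic step is then to add these three relations. Writing out (\ref{2.5}) in each case,
\begin{align*}
Y(\gamma_1,\gamma_2)&=\bigl[\tfrac12(\tilde W-\bar W)+\gamma_1-\gamma_2\bigr](\bar{\tilde W}-\tilde W-\bar W+W)+2(\tilde u\tilde v-\bar u\bar v),\\
Y(\gamma_2,\gamma_3)&=\bigl[\tfrac12(\bar W-\hat W)+\gamma_2-\gamma_3\bigr](\hat{\bar W}-\bar W-\hat W+W)+2(\bar u\bar v-\hat u\hat v),\\
Y(\gamma_3,\gamma_1)&=\bigl[\tfrac12(\hat W-\tilde W)+\gamma_3-\gamma_1\bigr](\tilde{\hat W}-\hat W-\tilde W+W)+2(\hat u\hat v-\tilde u\tilde v).
\end{align*}
The three terms $2(\tilde u\tilde v-\bar u\bar v)+2(\bar u\bar v-\hat u\hat v)+2(\hat u\hat v-\tilde u\tilde v)$ cancel identically, so the sum of the left-hand sides, which is zero, equals a purely $W$-dependent expression. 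It remains to expand the products and show that this expression is exactly $2\,\Xi^{(0,3)}$ as defined in (\ref{1.11}), using $\bar{\tilde W}=\tilde{\bar W}$ etc.\ (commutativity of shifts). The quadratic terms $\tfrac12(\tilde W-\bar W)(\bar{\tilde W}-\tilde W-\bar W+W)$ and its cyclic images combine, and one checks that the $W$-only, $\tilde W W$-type contributions telescope, leaving the symmetric bilinear form $\tfrac12[(\tilde W-\bar W)\bar{\tilde W}+(\bar W-\hat W)\hat{\bar W}+(\hat W-\tilde W)\tilde{\hat W}]$; the linear-in-$\gamma_k$ terms reorganize into the $\gamma_1,\gamma_2,\gamma_3$ combinations appearing in (\ref{1.11}).

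I expect the main obstacle to be purely bookkeeping: keeping the cyclic substitution pattern straight when applying Lemma~\ref{L-2.1} to the pairs $(2,3)$ and $(3,1)$ — in particular making sure that the "$W$" appearing in each copy of (\ref{2.5}) is the same unshifted function and that the mixed shifts are consistently identified — and then verifying that the non-$\gamma$ quadratic part collapses to the claimed three-term form with coefficient $\tfrac12$. There is no hard analytic content here; once the three instances of (\ref{2.5}) are written down correctly, the proof is a finite, if slightly tedious, polynomial identity in the shifted values of $W$ and the constants $\gamma_k$, which one can either expand by hand or check termwise against the definition (\ref{1.11}).
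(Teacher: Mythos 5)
Your proposal is correct and follows essentially the same route as the paper's proof: invoke Lemma~\ref{L-2.1} for the three pairs of shift directions, add the three copies of relation (\ref{2.5}), and observe that the $u,v$-dependent terms cancel while the remaining $W$-terms reorganize into $\Xi^{(0,3)}$. The only slip is the constant: the sum $Y(\gamma_1,\gamma_2)+Y(\gamma_2,\gamma_3)+Y(\gamma_3,\gamma_1)$ equals $\Xi^{(0,3)}$ itself, not $2\,\Xi^{(0,3)}$ (the factor $2$ enters only in the identity $\Xi^{\text{lpKP}}_{(-2\gamma_1,-2\gamma_2,-2\gamma_3)}=2\,\Xi^{(0,3)}$), which of course does not affect the conclusion $\Xi^{(0,3)}=0$.
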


\begin{proof}
Consider three replicas of  (\ref{2.5})
with parameters $(\gamma_1,\gamma_2),\,(\gamma_2,\gamma_3),\,(\gamma_3,\gamma_1)$, respectively.
Adding them together we have
\begin{equation}
\Xi^{(0,3)}=Y(\gamma_1,\gamma_2)+Y(\gamma_2,\gamma_3)+Y(\gamma_3,\gamma_1),
\end{equation}
where the terms containing  $u,v$ are canceled. This yields equation
(\ref{1.11}).
\end{proof}

\begin{Proposition}\label{P-2.3}
Let
$(u,v):\mathbb{R}\times\mathbb{Z}^2\rightarrow\mathbb{R}^2$  be a pair of functions such that
$(i)$ equations (\ref{2.2}a,b) have compatible solution $\chi$ for one value of $\lambda$;
$(ii)$ the system of equations (\ref{2.3}a,b) and (\ref{2.3}d) has a solution  $W$.
Then $W$ solves equation (\ref{1.12}), i.e. $\Xi^{(1,2)}=0$.
\end{Proposition}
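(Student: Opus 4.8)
The plan is to mimic the structure of Lemma~\ref{L-2.1} and Proposition~\ref{P-2.2}, but with the discrete shift $T_3$ replaced by the continuous flow $\partial_x$. First I would compute the cross action $(T_1\partial_x - \partial_x T_1)\chi$ using the spectral problems \eqref{1.4} (for $U_1$, the $x$-evolution) and \eqref{2.2a} (the $D^{(\gamma_1)}$ equation). Concretely, compatibility means $\partial_x\tilde\chi = \partial_x(D^{(\gamma_1)}\chi) = (\partial_x D^{(\gamma_1)})\chi + D^{(\gamma_1)}U_1\chi$ on one hand, and $\tilde U_1\tilde\chi = \tilde U_1 D^{(\gamma_1)}\chi$ on the other; so the obstruction matrix is $\partial_x D^{(\gamma_1)} + D^{(\gamma_1)}U_1 - \tilde U_1 D^{(\gamma_1)}$. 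With $(a,b)=(u,\tilde v)$ from \eqref{2.1}, this matrix should have entries built from the two equations of the semi-discrete lNLS system in the $(x, m_1)$ variables — the continuum-in-one-direction analogue of \eqref{2.4} — call them $\mathcal E_1$ and $\mathcal E_2$. Just as in \eqref{2.7}, I expect the vanishing of the cross action to force $\mathcal E_1 = \mathcal E_2 = 0$ after checking the relevant coefficient determinant is generically nonzero.

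Next I would translate the pair $(\mathcal E_1,\mathcal E_2)$ into a relation for $W$ using the auxiliary equations. Here the key substitutions are \eqref{2.3a} ($\tilde W - W = -2u\tilde v$) and \eqref{2.3d} ($\partial_x W = -2uv$); differentiating \eqref{2.3a} in $x$ and using \eqref{2.1} (namely $(u\tilde v)_x = \tilde u\tilde v - uv$) gives $(\tilde W - W)_x = -2(\tilde u\tilde v - uv)$, which is exactly the ingredient needed to eliminate $u,v$ in favour of $W$. Then I would form the combination $2(\tilde v\,\mathcal E_1 + \bar u\,\mathcal E_2)$ — the direct analogue of the quantity $Y(\gamma_1,\gamma_2)$ in \eqref{2.5} — but now with the $\bar{}$-shift replaced by the absence of a second discrete direction, so really the combination is $2(\tilde v\,\mathcal E_1 + u\,\mathcal E_2)$ or similar; a short computation should collapse it to $(\tilde W - W)_x - \bigl[\tfrac12(\tilde W - W) + \gamma_1\bigr](\text{something})$. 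Comparing with the target \eqref{1.12}, however, I notice that $\Xi^{(1,2)}$ involves the combination $\tilde W - \bar W$ and the difference $\gamma_1 - \gamma_2$ together with the second-difference $\bar{\tilde W} - \tilde W - \bar W + W$. This tells me the correct reading of the statement: the Lax triad is $(U_1, D^{(\gamma_1)}, D^{(\gamma_2)})$, so one should combine the semi-discrete relation coming from $(U_1, D^{(\gamma_1)})$ with the one from $(U_1, D^{(\gamma_2)})$ — two copies of the above computation, one for each $\gamma_k$ — and subtract, exactly as two copies of \eqref{2.5} were added in Proposition~\ref{P-2.2}. The $u,v$-terms should cancel in the difference, leaving precisely $\Xi^{(1,2)} = 0$.

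So the skeleton is: (a) derive from $(U_1, D^{(\gamma_k)})$ compatibility the semi-discrete lNLS equations in $(x,m_k)$ and the associated $W$-relation $Y_k(x) := (\tilde W - W)_x - [\tfrac12(\tilde W - W)+\gamma_k](\cdots) + (\text{$u,v$ terms})$ for $k=1,2$; (b) note this uses hypotheses $(i)$ for the pair $(\ref{2.2}\mathrm{a})$ resp.\ the analogue, and hypothesis $(ii)$ on the solvability of \eqref{2.3a},\eqref{2.3b},\eqref{2.3d}; (c) take an appropriate linear combination $Y_1 - Y_2$ (or $\tilde v$-weighted combination of the four lNLS equations) in which the explicit $u,v$ dependence drops out, yielding $\Xi^{(1,2)} = 0$. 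The main obstacle I anticipate is bookkeeping in step (a): the $x$-derivative acting on $D^{(\gamma_1)}$ produces terms like $u_x$ and $\tilde v_x$ (from $v_x$ via shift), and these must be reorganized — using $(U_1)$ at the shifted site $\tilde U_1$ and the relation $(u\tilde v)_x = \tilde u\tilde v - uv$ — into a clean form with no bare derivatives, so that the matrix factorizes as in \eqref{2.7}. Once that factorization is in hand, the rest is the same cancellation mechanism already used twice in the paper, so I would expect it to go through without surprises. A minor point to verify is that the coefficient determinant in the factored form is generically nonzero (it will be a polynomial in $\lambda$ with leading term not identically zero), which justifies concluding $\mathcal E_1 = \mathcal E_2 = 0$ from the single-$\lambda$ compatibility hypothesis.
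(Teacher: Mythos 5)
There is a genuine gap here: your plan computes the wrong compatibility. Hypothesis $(i)$ of Proposition~\ref{P-2.3} only asserts that the two \emph{discrete} equations (\ref{2.2}a,b) have a compatible solution $\chi$; it does not assume that $\chi$ also satisfies $\partial_x\chi=U_1\chi$, so the cross action $(T_k\partial_x-\partial_x T_k)\chi$ you want to set to zero is simply not available from the stated hypotheses. (Contrast Proposition~\ref{P-2.4}, whose hypothesis $(i)$ explicitly includes (\ref{1.4}) and (\ref{1.8}a); only there does the paper derive the semi-discrete relations (\ref{2.11}) from $\partial_x T_1=T_1\partial_x$.) The paper's actual proof needs no new compatibility computation at all: the hypotheses of Lemma~\ref{L-2.1} are satisfied verbatim, so the lNLS equations (\ref{2.4}) hold and hence $Y(\gamma_1,\gamma_2)=0$ in identity (\ref{2.5}); then (\ref{2.3}d) gives $\tilde W_x=-2\tilde u\tilde v$ and $\bar W_x=-2\bar u\bar v$, so the last term $2(\tilde u\tilde v-\bar u\bar v)$ in (\ref{2.5}) equals $-(\tilde W-\bar W)_x$, whence $Y(\gamma_1,\gamma_2)=-\Xi^{(1,2)}$ and the conclusion is immediate. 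The continuous variable $x$ enters only through the auxiliary relation (\ref{2.3}d), not through any Lax compatibility in $x$.

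Beyond the hypothesis mismatch, the cancellation you defer to step (c) is not credible as described. The target $\Xi^{(1,2)}$ contains the second difference $\bar{\tilde W}-\tilde W-\bar W+W$, which cannot arise from a combination $Y_1-Y_2$ of relations each involving only single shifts $\tilde{\ }$ or $\bar{\ }$; and the semi-discrete combinations $\tilde v\,\Xi_1^{(1,1)}-u\,\Xi_2^{(1,1)}$ carry the terms $u_x\tilde v-u\tilde v_x$ (resp.\ $u_x\bar v-u\bar v_x$), which do not cancel against each other and cannot be expressed through $W$ alone. Eliminating them would force you back to precisely the mixed-shift lNLS relations (\ref{2.4}) coming from the $T_1T_2=T_2T_1$ compatibility --- that is, to the content of Lemma~\ref{L-2.1} that your route bypasses. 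The correct proof is the two-line reuse of (\ref{2.5}) together with (\ref{2.3}d).
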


\begin{proof}
In light of (\ref{2.3}d), the last term in (\ref{2.5}) is equal to $-(\tilde{W}-\bar{W})_x$.
Thus the proof is completed since  $\Xi^{(1,2)}=-Y(\gamma_1,\gamma_2)$.
\end{proof}

\begin{Proposition}\label{P-2.4}
Let
$(u,v):\mathbb{R}^2\times\mathbb{Z}\rightarrow\mathbb{R}^2$  be a pair of
functions such that $(i)$ equations (\ref{1.4}), (\ref{1.8}a) and (\ref{2.2}a) have
compatible solution $\chi$  for one value of  $\lambda$;
$(ii)$ the system of equations (\ref{2.3}a,d) has a solution  $W$.
Then $W$ solves equation (\ref{1.13}), i.e. $\Xi^{(2,1)}=0$.
\end{Proposition}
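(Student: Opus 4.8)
The plan is to mimic the structure of Lemma \ref{L-2.1} and Proposition \ref{P-2.3}, but now with the mixed Lax triad $(U_1,U_2,D^{(\gamma_1)})$, i.e.\ the two continuous flows in $x=\tau_1$ and $y=\tau_2$ together with one discrete shift $T_1$. First I would compute the compatibility conditions pairwise. The $(U_1,U_2)$ pair is standard: it reproduces the NLS flow \eqref{1.6a}, $\partial_y(u,v)=X_2=(u_{xx}-2u^2v,-v_{xx}+2uv^2)$, and it already tells us $w=-2uv$ solves the KP equation restricted to these two variables. The $(U_1,D^{(\gamma_1)})$ pair, by the first relation in \eqref{2.1}, forces $a=u$, $b=\tilde v$ and gives the Riccati-type identity $(u\tilde v)_x=\tilde u\tilde v-uv$; combined with the auxiliary equations \eqref{2.3a} ($\tilde W-W=-2u\tilde v$) and \eqref{2.3d} ($\partial_x W=-2uv$) this is exactly the bridge between the potential $W$ and the pair $(u,v)$. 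The remaining compatibility, $(U_2,D^{(\gamma_1)})$, i.e.\ $\partial_y\tilde\chi=\widetilde{U_2}\chi$ expanded via $\tilde\chi=D^{(\gamma_1)}\chi$, is the new computation and will produce the evolution of $\tilde v$ (equivalently of $a,b$) in $y$.

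Next I would translate everything into $W$. Differentiating \eqref{2.3a} in $x$ and using \eqref{2.3d} and the Riccati identity expresses $\tilde u\tilde v$ and $uv$ in terms of $\partial_x W$ and $\partial_x\tilde W$; more precisely $\partial_x(\tilde W-W)=-2\partial_x(u\tilde v)=-2(\tilde u\tilde v-uv)=(\tilde W-\tilde{\tilde{}}\text{shifted pieces})$ — concretely one gets $\tilde u\tilde v-uv=\tfrac12(W_x-\tilde W_x)$, hence $\tilde u\tilde v=-\tfrac12\tilde W_x$ and $uv=-\tfrac12 W_x$ are both already known, consistently. The genuinely new input is the $y$-derivative: from $\partial_y W=-2\partial_y^{-1}\!\big(\cdots\big)$ — rather, I would use $\partial_y(\tilde W-W)=-2\partial_y(u\tilde v)$ and feed in the $y$-evolution of $u$ and of $\tilde v$ obtained in the previous paragraph (the $X_2$ flow for $u$, and the newly derived flow for $\tilde v=b$). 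After substituting $uv=-\tfrac12W_x$, $u_xv-uv_x$ in terms of $W$, and $\tilde u\tilde v=-\tfrac12\tilde W_x$, the identity $\partial_y(\tilde W-W)=-2\partial_y(u\tilde v)$ should collapse, after one $x$-integration is undone, precisely to $\Xi^{(2,1)}=(\tilde W-W)_y-\big[(\tilde W+W)_x+2\gamma_1(\tilde W-W)+\tfrac12(\tilde W-W)^2\big]_x=0$.

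The main obstacle I anticipate is bookkeeping in the second step: one must carefully derive the $y$-flow of $b=\tilde v$ from the compatibility of $U_2$ with $D^{(\gamma_1)}$ and check that the resulting expression, once rewritten through $\tilde v=-(\tilde W-W)/(2u)$ and the $x$-relations, assembles into a total $x$-derivative — the appearance of the quadratic term $\tfrac12(\tilde W-W)^2$ and of the $2\gamma_1(\tilde W-W)$ term must come out with exactly the stated coefficients, and the parameter $\gamma_1$ enters only through the $(1,1)$-entry $\lambda-\gamma_1+u\tilde v$ of $D^{(\gamma_1)}$, so tracking where $\gamma_1$ survives the cancellations is the delicate point. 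A secondary subtlety is that, as in \eqref{2.7}, the matrix compatibility yields the scalar equations multiplied by a generically invertible coefficient matrix acting on $\chi$, so one should note that $\chi\ne 0$ (for the one chosen value of $\lambda$) lets us strip these factors; this is routine given the earlier lemmas but worth stating. Apart from that, every individual manipulation is a direct calculation of the same flavour as in the proofs of Lemma \ref{L-2.1} and Propositions \ref{P-2.2}, \ref{P-2.3} above.
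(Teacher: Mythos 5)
Your outline follows essentially the same route as the paper's proof: the $(U_1,U_2)$ compatibility gives the NLS system \eqref{2.9}, the $(U_1,D^{(\gamma_1)})$ compatibility gives the semi-discrete system \eqref{2.11}, and the auxiliary relations (\ref{2.3}a,d) convert $uv$, $\tilde u\tilde v$, $u\tilde v$ into $-\tfrac12 W_x$, $-\tfrac12\tilde W_x$, $-\tfrac12(\tilde W-W)$. Two remarks. First, your detour through the $(U_2,D^{(\gamma_1)})$ compatibility is unnecessary: all that is needed for $\partial_y(u\tilde v)$ is $\tilde v_y$, and this is supplied by the $T_1$-shift of the NLS equation, $\tilde\Xi^{(2,0)}_2=0$, which is available because $\tilde\chi=D^{(\gamma_1)}\chi$ is a compatible solution of the shifted linear problems whenever $\chi$ is a compatible solution of \eqref{1.4}, \eqref{1.8a}, \eqref{2.2a}. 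Extracting $\tilde v_y$ from the third cross-compatibility can be made to work, but it costs a fresh matrix computation plus another coefficient-stripping argument of the type used in \eqref{2.7} and \eqref{2.12}, only to recover (modulo $\Xi^{(1,1)}=0$) information you already have.

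Second, and more importantly, the decisive step is exactly the computation you defer with ``should collapse'': the proof consists in exhibiting the identity
\begin{equation*}
\Xi^{(2,1)}=-2\bigl(\tilde{v}\,\Xi^{(2,0)}_1+u\,\tilde{\Xi}^{(2,0)}_2\bigr)
-2\bigl(\tilde{v}\,\Xi^{(1,1)}_1-u\,\Xi^{(1,1)}_2\bigr)_x,
\end{equation*}
which follows by adding
$\tilde{v}\Xi^{(2,0)}_1+u\tilde{\Xi}^{(2,0)}_2=(u\tilde{v})_y-(u_x\tilde{v}-u\tilde{v}_x)_x-2u\tilde{v}(\tilde{u}\tilde{v}-uv)$
to
$(\tilde{v}\Xi^{(1,1)}_1-u\Xi^{(1,1)}_2)_x=(u_x\tilde{v}-u\tilde{v}_x)_x-(\tilde{u}\tilde{v}+uv+2\gamma_1u\tilde{v}-2u^2\tilde{v}^2)_x$,
so that the $(u_x\tilde v-u\tilde v_x)_x$ terms cancel, and then substituting the $W$-expressions above: the non-derivative piece $4u\tilde v(\tilde u\tilde v-uv)=2[(u\tilde v)^2]_x$ contributes $\tfrac12[(\tilde W-W)^2]_x$, the term $-4u^2\tilde v^2$ inside the second bracket contributes $-[(\tilde W-W)^2]_x$, and together they yield precisely the coefficient $\tfrac12$ in \eqref{1.13}, while $4\gamma_1u\tilde v$ yields $-2\gamma_1(\tilde W-W)$ and $2(\tilde u\tilde v+uv)$ yields $-(\tilde W+W)_x$. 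This cancellation pattern is the entire content of the proposition; as written, your proposal asserts it rather than proves it (and flags it as the anticipated obstacle), so the argument is incomplete until this combination, or an equivalent one, is written out. Once it is, the rest of your plan goes through.
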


\begin{proof}
The compatibility condition
$\partial_y\partial_x\chi=\partial_x\partial_y\chi$ gives rise to the NLS
equations (\ref{1.6a}), rewritten as
\begin{subequations}\label{2.9}
\begin{align}
&\Xi_1^{(2,0)}\equiv u_y-u_{xx}+2u^2v=0,\label{2.9a}\\
&\Xi_2^{(2,0)}\equiv v_y+v_{xx}-2uv^2=0.\label{2.9b}
\end{align}
\end{subequations}
In fact,
\begin{align}
(\partial_y\partial_x-\partial_x\partial_y)\chi & =(U_{1,y}-U_{2,x}+[U_1,U_2])\chi\nonumber \\
&=\left(\begin{array}{cc} 0&\Xi^{(2.0)}_1\\
\Xi^{(2.0)}_2&0
\end{array}\right){\chi^{(1)} \choose
\chi^{(2)}}=\left(\begin{array}{cc}\chi^{(2)}&0\\0&\chi^{(1)}
\end{array}\right){\Xi^{(2,0)}_1 \choose \Xi^{(2,0)}_2}.\label{2.10}
\end{align}
Further, the compatibility condition
$\partial_xT_1\chi=T_1\partial_x\chi$ yields the semi-discrete NLS equations
\begin{subequations}\label{2.11}
\begin{align}
&\Xi_1^{(1,1)}\equiv u_x-\tilde{u}-\gamma_1u+u^2\tilde{v}=0,\label{2.11a}\\
&\Xi_2^{(1,1)}\equiv \tilde{v}_x+v+\gamma_1\tilde{v}-u\tilde{v}^2=0,\label{2.11b}
\end{align}
\end{subequations}
since the cross action leads to
\begin{align}
(\partial_xT_1-T_1\partial_x)\chi &=(D^{(\gamma_1)}_x-\tilde{U}_1D^{(\gamma_1)}+D^{(\gamma_1)}U_1)\chi \nonumber\\
&=\left(\begin{array}{cc} \kappa_{11}&\Xi^{(1,1)}_1\\\Xi^{(1,1)}_2&0
\end{array}\right){\chi^{(1)} \choose
\chi^{(2)}}=\left(\begin{array}{cc}\tilde{\chi}^{(2)}&u\chi^{(1)}\\0&\chi^{(1)}
\end{array}\right){\Xi^{(1,1)}_1 \choose \Xi^{(1,1)}_2},
\label{2.12}
\end{align}
where
$\kappa_{11}=(u\tilde{v})_x-\tilde{u}\tilde{v}+uv=\tilde{v}\Xi^{(1,1)}_1+u\Xi^{(1,1)}_2$,
and the relation $\tilde{\chi}^{(2)}=\tilde{v}\chi^{(1)}+\chi^{(2)}$ has been used.
By calculation we have
\begin{align*}
&\tilde{v}\Xi^{(2,0)}_1+u\tilde{\Xi}^{(2,0)}_2
=(u\tilde{v})_y-(u_x\tilde{v}-u\tilde{v}_x)_x-2u\tilde{v}(\tilde{u}\tilde{v}-uv),\\
&(\tilde{v}\Xi^{(1,1)}_1-u\Xi^{(1,1)}_2)_x
=(u_x\tilde{v}-u\tilde{v}_x)_x-(\tilde{u}\tilde{v}+uv+2\gamma_1u\tilde{v}-2u^2\tilde{v}^2)_x.
\end{align*}
Adding them together  we arrive at
\begin{equation}
\Xi^{(2,1)}=-2(\tilde{v}\Xi^{(2,0)}_1+u\tilde{\Xi}^{(2,0)}_2)-2(\tilde{v}\Xi^{(1,1)}_1-u\Xi^{(1,1)}_2)_x,
\label{2.13}
\end{equation}
where the term $(u_x\tilde{v}-u\tilde{v}_x)_x$ is canceled and the
variable $W$ is introduced by equations (\ref{2.3}a,d). Thus
$\Xi^{(2,1)}=0$.
\end{proof}

Let us  back to the equations \eqref{1.11}, \eqref{1.12} and \eqref{1.13}.
We have seen that (\ref{1.11}) is nothing but  the lpKP equation \eqref{1.1}.
Besides, equations (\ref{1.12}) and (\ref{1.13}) have close relations with the (N-2) and (N-3)
models that were discovered by Date, Jimbo and Miwa\cite{9-DatJM-JPSJ-1982}, which are
\begin{align}
&\Xi^{\mathrm{N}2}\equiv(\tilde{V}-\bar{V})_x-(e^{\bar{\tilde{V}}}-e^{\tilde{V}}-e^{\bar{V}}+e^{V})=0,\label{1.15}\\
&\Xi^{\mathrm{N}3}\equiv\Delta(V_y+\frac{2}{h}V_x-2VV_x)-(\Delta+2)V_{xx}=0,\label{1.16}
\end{align}
where $\Delta f=\tilde{f}-f$ for arbitrary function $f$.
In fact, for  (\ref{1.12}), introducing
\begin{equation}
V=\ln[(\tilde{W}-\bar{W})/2+\gamma_1-\gamma_2],
\label{1.17}
\end{equation}
and then using  (\ref{1.12}), one finds
\begin{equation*}
V_x=\frac{(\tilde{W}-\bar{W})_x/2}{(\tilde{W}-\bar{W})/2+\gamma_1-\gamma_2}
=\frac{1}{2}(\bar{\tilde{W}}-\tilde{W}-\bar{W}+W).
\end{equation*}
It then follows that $(\tilde{V}-\bar{V})_x$ is equal to the second part in equation
(\ref{1.15}). Hence $\Xi^{\text{N2}}=0$.
Thus, for any solution $W$ of the equation \eqref{1.12}, $V$ defined by \eqref{1.17}
provides a special solution for \eqref{1.15}.
For the equation (\ref{1.13}), if  $W$ is a solution, then
\begin{equation}
V=(\tilde{W}-W)/2
\end{equation}
solves the (N-3) equation (\ref{1.16}). Actually, it is easy to find (with $\gamma_1=1/h$)
\begin{equation*}
\frac{1}{2}\Xi^{(2,1)}=(V_y+\frac{2}{h}V_x-2VV_x)-(V+W)_{xx},
\end{equation*}
which implies $\Xi^{\mathrm{N3}}=\Delta\Xi^{(2,1)}/2=0$.
In this sense, equation (\ref{1.13}) is the potential version of (N-3).
Note that  (N-3) model was also derived by Kanaga Vel and Tamizhmani, with the help of
quasi-difference operators \cite{14-KanT-CSF-1997}, known as the D$\Delta$KP equation
there. Besides, some properties of the D$\Delta$KP hierarchy, including symmetries, Hamiltonian structures and  continuum limit,
were investigated in \cite{11-FuHTZ-Non-2013}.

At the end of this subsection, we consider continuum limits of equations (\ref{1.11},\ref{1.12},\ref{1.13}).
Let $\gamma_k=-1/\varepsilon_k$,  $\varepsilon_k=c_k\varepsilon$,
$(k=1,2,3)$, where $c_1,\,c_2,\,c_3$ are arbitrary distinct non-zero constants.
For any smooth function  $W(x,y,t)$, define
\begin{equation}\label{2.14}
T_kW=W(x+c_k\varepsilon,\,y-c_k^2\varepsilon^2/2,\,t+c_k^3\varepsilon^3/3),\quad
(k=1,2,3).
\end{equation}
Denote $T_1 W=\tilde{W},\,T_2W=\bar{W},\,T_3W=\hat{W}$  for short.
By straightforward calculations we have the following.
\begin{Proposition}\label{P-2.5}
Under the Ansatz (\ref{2.14}), in the
neighborhood of $\varepsilon\sim 0$, the following Taylor expansions
hold for any smooth function  $W(x,y,t)$,
\begin{subequations}\label{2.15}
\begin{align}
&\Xi^{(2,1)}=\Xi^{(3,0)}\frac{2c^2_1}{3}\varepsilon^2+O(\varepsilon^3), \label{2.15a}\\
&\Xi^{(1,2)}=\Xi^{(3,0)}\frac{c_1c_2(c_1-c_2)}{3}\varepsilon^3+O(\varepsilon^4), \label{2.15b}\\
&\Xi^{(0,3)}=\Xi^{(3,0)}\frac{1}{3}[c_1c_2(c_2-c_1)+c_2c_3(c_3-c_2)+c_3c_1(c_1-c_3)]\varepsilon^3+O(\varepsilon^4). \label{2.15c}
\end{align}
\end{subequations}
\end{Proposition}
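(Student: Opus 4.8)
The plan is to perform a systematic Taylor expansion of each discrete difference operator appearing in $\Xi^{(0,3)}$, $\Xi^{(1,2)}$ and $\Xi^{(2,1)}$, using the Ansatz \eqref{2.14}, and collect the leading non-vanishing order. First I would record the expansion of a single shift: writing $\delta_k = c_k\varepsilon$, the Ansatz \eqref{2.14} means $T_k = \exp(\delta_k\partial_x - \tfrac{1}{2}\delta_k^2\partial_y + \tfrac{1}{3}\delta_k^3\partial_t + \cdots)$ acting on $W$, so that $T_k W = W + \delta_k W_x + \delta_k^2(\tfrac{1}{2}W_{xx} - \tfrac{1}{2}W_y) + \delta_k^3(\tfrac{1}{6}W_{xxx} - \tfrac{1}{2}W_{xy} + \tfrac{1}{3}W_t) + O(\varepsilon^4)$. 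The particular fractional coefficients $-c_k^2\varepsilon^2/2$ and $c_k^3\varepsilon^3/3$ in \eqref{2.14} are chosen precisely so that the various lower-order contributions cancel and the first surviving term reproduces the combination $W_{xt} - \tfrac14(W_{xxx}+3W_x^2)_x - \tfrac34 W_{yy}$ that defines $\Xi^{(3,0)}$; so the bookkeeping of these coefficients is the crux of the computation.

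Next I would treat the three cases in increasing order of discreteness. For \eqref{2.15a}: in $\Xi^{(2,1)}$ only the single shift $T_1$ occurs (through $\tilde W = T_1 W$), so I substitute the expansion of $T_1 W$ into $\Xi^{(2,1)}=(\tilde W-W)_y-[(\tilde W+W)_x+2\gamma_1(\tilde W-W)+\tfrac12(\tilde W-W)^2]_x$, remembering $\gamma_1=-1/\varepsilon_1=-1/(c_1\varepsilon)$. The factor $\gamma_1$ carries a $\varepsilon^{-1}$, and $\tilde W - W = c_1\varepsilon\,W_x + O(\varepsilon^2)$ carries a $+\varepsilon$, so the product $2\gamma_1(\tilde W - W)$ starts at order $\varepsilon^0$; combined with the other terms the $\varepsilon^0$ and $\varepsilon^1$ contributions should cancel identically, leaving the $\varepsilon^2$ term equal to $\tfrac{2c_1^2}{3}\Xi^{(3,0)}$. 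For \eqref{2.15b}: $\Xi^{(1,2)}$ involves the two shifts $T_1,T_2$ and the pre-factor $\gamma_1-\gamma_2 = -1/(c_1\varepsilon)+1/(c_2\varepsilon)$, which is $O(\varepsilon^{-1})$; here the building block is the double difference $\bar{\tilde W}-\tilde W-\bar W+W=(T_1-1)(T_2-1)W$, which starts at order $\varepsilon^2$, so the leading order of $\Xi^{(1,2)}$ is $\varepsilon^3$, and I verify the coefficient $\tfrac{c_1c_2(c_1-c_2)}{3}$. For \eqref{2.15c}: $\Xi^{(0,3)}$ is cyclic in the three directions, so I expand each of the three summed groups $\tfrac12(\tilde W-\bar W)\bar{\tilde W}$, $\gamma_1(\bar{\tilde W}-\hat{\tilde W}-\bar W+\hat W)$, etc., and exploit the built-in cyclic symmetry under $(1,2,3)\to(2,3,1)$: lower-order terms cancel by antisymmetry, and the $\varepsilon^3$ coefficient assembles into the stated symmetric function $\tfrac13[c_1c_2(c_2-c_1)+c_2c_3(c_3-c_2)+c_3c_1(c_1-c_3)]$ times $\Xi^{(3,0)}$.

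The main obstacle is purely the volume and delicacy of the algebra: because of the factors $\gamma_k\sim\varepsilon^{-1}$, every term must be expanded one or two orders past the naive truncation, and one must be careful that the quadratic terms like $\tfrac12(\tilde W-W)^2$ and $(\tilde W-\bar W)\bar{\tilde W}$ contribute at the relevant order. A clean way to organize this and avoid error is to introduce the operators $\mathcal D_k := \delta_k\partial_x - \tfrac12\delta_k^2\partial_y + \tfrac13\delta_k^3\partial_t$ so that $T_k=e^{\mathcal D_k}$, expand $T_k-1$ and $(T_j-1)(T_k-1)$ as formal power series in $\varepsilon$, substitute, and then verify that at each order below the claimed one the coefficient of every monomial in the derivatives of $W$ vanishes — which is really a finite identity among the $c_k$'s and the KP-defining combination. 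Once the cancellations through the sub-leading orders are checked, matching the surviving term against $\Xi^{(3,0)}=W_{xt}-\tfrac14(W_{xxx}+3W_x^2)_x-\tfrac34 W_{yy}$ is immediate, and the three coefficients in \eqref{2.15} follow. Since the statement only asserts the leading behavior ``$+O(\varepsilon^{n+1})$'', no resummation or convergence issue arises; everything is a finite Taylor computation.
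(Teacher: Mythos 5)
Your proposal is correct and follows essentially the same route as the paper, which justifies Proposition \ref{P-2.5} precisely by the "straightforward calculation" you describe: Taylor-expanding the shifts under the Ansatz (\ref{2.14}) with $\gamma_k=-1/(c_k\varepsilon)$, checking that all sub-leading orders cancel (which indeed requires expanding one or two orders beyond the naive truncation because of the $\varepsilon^{-1}$ factors), and matching the first surviving term against $\Xi^{(3,0)}$. Your structural observations — the single-shift case closing at $\varepsilon^2$, the double difference $(T_1-1)(T_2-1)W$ starting at $\varepsilon^2$ so that $\Xi^{(1,2)}$ opens at $\varepsilon^3$, and the cyclic assembly for $\Xi^{(0,3)}$ — are exactly the bookkeeping needed, and they reproduce the stated coefficients.
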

Thus, all the continuum limits of the lattice pKP equations (\ref{1.11}), (\ref{1.12}) and (\ref{1.13})
give rise to the same pKP equation (\ref{1.3}).
The Ansatz (\ref{2.14}) is crucial, which is proposed based on
comparing the velocities of the Abel-Jacobi variable $\vec{\phi}$
along the discrete $S_{\gamma_k}$-flow and the continuous $H_j$-flow  (see Appendix \ref{A-1}).

\section{ The integrable Hamiltonian system $(H_1)$}\label{sec-3}

In \cite{6-CaoWG-JMP-1999,8-CaoZ-JPA-2012}  an integrable Hamiltonian system
is constructed from the ZS-AKNS spectral problem,
\begin{subequations}\label{3.1}
\begin{align}
&\partial_x{p_j \choose q_j}={-\partial H_1/\partial q_j \choose
\partial H_1/\partial p_j}= \left(\begin{array}{cc}
\alpha_j/2&-<p,p>\\<q,q>&-\alpha_j/2
\end{array}\right){p_j \choose q_j},\label{3.1a}\\
&H_1(p,q)=-\frac{1}{2}<Ap,q>+\frac{1}{2}<p,p><q,q>.\label{3.1b}
\end{align}
\end{subequations}
where $A=\mathrm{diag}(\alpha_1,\cdots,\alpha_N)$,
$<\xi,\eta>=\sum^N_{j=1}\xi_j\eta_j$. It can be regarded as $N$
replicas of equation (\ref{1.4}) with eigenvalues
$\alpha_1,\cdots,\alpha_N$, respectively, under the constraint
\begin{equation}
(u,v)=f_{U_1}(p,q)=(-<p,p>,<q,q>).
\label{3.2}
\end{equation}
The integrability requires enough number of involutive integrals. In deriving them, we use the Lax equation
\begin{equation}
\partial_xL(\lambda)=[U_1(\lambda),L(\lambda)],
\label{3.3}
\end{equation}
which has a solution, the Lax matrix  \cite{6-CaoWG-JMP-1999,8-CaoZ-JPA-2012}
\begin{equation}
L(\lambda;p,q)=\left(\begin{array}{cc}
1/2+Q_{\lambda}(p,q)&-Q_{\lambda}(p,p)\\Q_{\lambda}(q,q)&-1/2-Q_{\lambda}(p,q)
\end{array} \right),
\label{3.4}
\end{equation}
where $Q_{\lambda}(\xi,\eta)=<(\lambda I-A)^{-1}\xi,\eta>$. By
equation (\ref{3.3}), $F(\lambda)=\text{det}L(\lambda)$  is independent of
the argument $x$. Three sets of integrals are derived from the expansions
\begin{subequations}\label{3.5}
\begin{align}
&F(\lambda)=-\frac{1}{4}+\sum_{k=1}^N\frac{E_k}{\lambda-\alpha_k}
=-\frac{1}{4}+\sum_{j=0}^{\infty}F_j\lambda^{-j-1},\label{3.5a} \\
&H(\lambda)=\sqrt{-F(\lambda)}=\frac{1}{2}-2\sum_{k=0}^{\infty}H_k\lambda^{-k-1}, \label{3.5b}
\end{align}
\end{subequations}
with $F_0=-<p,q>$, $H_0=-<p,q>/2$, $H_1$ exactly the same as in
equation (\ref{3.1b}), and
\begin{subequations}\label{3.6}
\begin{align}
&E_k=-p_kq_k+\sum_{1\leq j\leq N;\,j\neq k}\frac{(p_jq_k-p_kq_j)^2}{\alpha_k-\alpha_j},\label{3.6a}\\
&F_k=-<A^kp,q>+\sum_{\begin{subarray}{c}i+j=k-1;\\
i,\,j\geq 0\end{subarray}}(<A^ip,p><A^jq,q>-<A^ip,q><A^jp,q>),\label{3.6b}\\
&H_k=\frac{1}{2}F_k+2\sum_{\begin{subarray}{c}i+j=k-1;\\
i,\,j\geq 0\end{subarray}}H_iH_j.\label{3.6c}
\end{align}
\end{subequations}
The functions $\{E_k\}$ are called confocal polynomials, satisfying
\begin{equation}\label{3.7}
\sum_{k=1}^N\alpha_k^jE_k=F_j,\quad\sum_{k=1}^NE_k=F_0=-<p,q>.
\end{equation}
Further, we have the Lax equation along the $F(\lambda)$-flow,
\begin{equation}\label{3.8}
\frac{\text{d}}{\text{d}t_{\lambda}}L(\mu)= \{L(\mu),F(\lambda)\}=\frac{2}{\lambda-\mu}[L(\lambda),L(\mu)],
\end{equation}
which can be verified directly. It implies $\{F(\mu),F(\lambda)\}=\partial_{t_{\lambda}}\text{det}L(\mu)=0$.
Here $\{A,B\}$ is the usual Poisson bracket defined as
\[\{A,B\}=\sum^{N}_{k=1}\biggl(\frac{\partial A}{\partial q_k}\frac{\partial B}{\partial p_k}-
\frac{\partial A}{\partial p_k}\frac{\partial B}{\partial q_k}\biggr).
\]
As a corollary we have

\begin{Lemma}\label{L-3.1}
The members in the set $\{E_j,F_k,H_l\}$ are involutive in pairs.
\end{Lemma}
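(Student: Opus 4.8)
The plan is to derive everything from the single Lax-type identity \eqref{3.8} for the generating function $F(\lambda)$. Since $\{E_j\}$, $\{F_k\}$ and $\{H_l\}$ are all extracted from $F(\lambda)$ (and from $H(\lambda)=\sqrt{-F(\lambda)}$) by the expansions \eqref{3.5a}--\eqref{3.5b}, it suffices to show that $\{F(\mu),F(\lambda)\}=0$ for all spectral parameters $\mu,\lambda$ in a neighbourhood where the expansions converge, and then to read off involutivity of the coefficients.

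First I would record that $\{F(\mu),F(\lambda)\}=\partial_{t_\lambda}\det L(\mu)$ by definition of the $F(\lambda)$-flow, and then invoke \eqref{3.8}: since $\frac{d}{dt_\lambda}L(\mu)=\frac{2}{\lambda-\mu}[L(\lambda),L(\mu)]$ is a commutator, $\partial_{t_\lambda}\det L(\mu)=\operatorname{tr}\bigl((\operatorname{adj}L(\mu))\,\dot L(\mu)\bigr)=\frac{2}{\lambda-\mu}\operatorname{tr}\bigl((\operatorname{adj}L(\mu))[L(\lambda),L(\mu)]\bigr)$, and for a $2\times2$ matrix $\operatorname{adj}L(\mu)$ is a scalar combination of $L(\mu)$ and the identity, so the trace of the product with $[L(\lambda),L(\mu)]$ vanishes. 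Hence $\{F(\mu),F(\lambda)\}=0$ identically. This is precisely the observation already made in the excerpt just after \eqref{3.8}, so the real content of the lemma is bookkeeping: propagating this to the individual integrals.

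Next I would expand both arguments. Writing $F(\mu)=-\tfrac14+\sum_{j\ge0}F_j\mu^{-j-1}$ and $F(\lambda)=-\tfrac14+\sum_{k\ge0}F_k\lambda^{-k-1}$, bilinearity of the Poisson bracket gives $0=\{F(\mu),F(\lambda)\}=\sum_{j,k\ge0}\{F_j,F_k\}\mu^{-j-1}\lambda^{-k-1}$; matching coefficients of $\mu^{-j-1}\lambda^{-k-1}$ yields $\{F_j,F_k\}=0$ for all $j,k$. For the $E_k$'s, use the partial-fraction form $F(\mu)=-\tfrac14+\sum_k E_k/(\mu-\alpha_k)$: since the $\alpha_k$ are distinct, $\{F(\mu),F(\lambda)\}=0$ as a rational function of $\mu$ forces the residue at each pole $\mu=\alpha_k$ to vanish, i.e. $\{E_k,F(\lambda)\}=0$, and expanding the second factor again in $\lambda$ gives $\{E_k,F_l\}=0$; taking residues in $\lambda$ as well gives $\{E_j,E_k\}=0$. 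Finally, for the $H_l$'s, note $H(\lambda)=\sqrt{-F(\lambda)}$ is, on the relevant domain, a function of $F(\lambda)$ alone — more precisely, each $H_l$ is a polynomial in $F_0,\dots,F_l$ by the recursion \eqref{3.6c} — so $\{H_l,\,\cdot\,\}$ is a combination of $\{F_j,\,\cdot\,\}$'s; hence $\{H_l,F_k\}=0$, $\{H_l,E_k\}=0$, and $\{H_l,H_m\}=0$ all follow. (Equivalently, $\{H(\mu),H(\lambda)\}=\tfrac{1}{4H(\mu)H(\lambda)}\{F(\mu),F(\lambda)\}=0$ and one expands.)

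I do not expect a serious obstacle here; the only point needing a little care is the justification that coefficient-matching in the double series is legitimate — i.e. that the bracket and the (convergent, for $|\mu|,|\lambda|$ large) Laurent expansions may be interchanged — and that the passage from $\{E_k,F(\lambda)\}=0$ to vanishing of all mutual brackets of the $E$'s uses distinctness of $\alpha_1,\dots,\alpha_N$ essentially, so that residues are well separated. Both are routine. Thus the lemma reduces to the already-verified Lax identity \eqref{3.8} together with the structural fact that every member of $\{E_j,F_k,H_l\}$ is a residue or Taylor coefficient of $F(\lambda)$, and I would present it in exactly that order: state that $\{F(\mu),F(\lambda)\}=0$ from \eqref{3.8}, then extract $\{F_j,F_k\}=0$ and $\{E_j,E_k\}=0$ and the mixed brackets by expansion/residues, then handle the $H_l$ via \eqref{3.6c}.
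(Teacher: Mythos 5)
Your argument is correct and follows exactly the paper's route: Lemma \ref{L-3.1} is stated there as an immediate corollary of the Lax identity \eqref{3.8}, which gives $\{F(\mu),F(\lambda)\}=\partial_{t_\lambda}\det L(\mu)=0$, and the involutivity of $E_j$, $F_k$, $H_l$ is then read off from the expansions \eqref{3.5} exactly as you do. You merely spell out the coefficient/residue bookkeeping and the $2\times 2$ trace computation that the paper leaves implicit, so there is nothing to correct.
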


By \cite{6-CaoWG-JMP-1999}, there is an inner relation between the integral $H_k$ and
$(X_k)$, the AKNS equation (\ref{1.5}). The involutivity $\{H_1,H_k\}=0$
implies the commutativity of the Hamiltonian phase flows
$g_{H_1}^x,\,g_{H_k}^{\tau_k}$. This yields a compatible solution
for $(H_1),\,(H_k)$, and hence a solution to equation $(X_k)$,
respectively, as
\begin{subequations}\label{g-flow}
\begin{align}
&(p(x,\tau_k),\,q(x,\tau_k))=g_{H_1}^x g_{H_k}^{\tau_k} (p_0,q_0), \\
&(u(x,\tau_k),\,v(x,\tau_k))=f_{U_1}(p,q)=(-<p,p>,<q,q>).
\end{align}
\end{subequations}

Let $\alpha(\lambda)=\Pi_{k=1}^N(\lambda-\alpha_k)$. By \cite{6-CaoWG-JMP-1999,8-CaoZ-JPA-2012}, a
curve $\mathcal{R}:\xi^2=R(\lambda)$, with genus $g=N-1$, is
constructed by the factorization of
$F(\lambda)=-\Lambda(\lambda)/[4\alpha(\lambda)]$, with
$R(\lambda)=\Lambda(\lambda)\alpha({\lambda})$. For non-branching
$\lambda$, there are two points $\mathfrak{p}(\lambda)$,
$\tau\mathfrak{p}(\lambda)$ on $\mathcal{R}$, with
$\tau: \mathcal{R}\rightarrow\mathcal{R}$ the map of changing sheets.
Consider two objects on the curve, the canonical basis
$a_1,\cdots,a_g,b_1,\cdots,b_g$ of homology group of contours, and
the basis of holomorphic differentials, written in the vector form
as $\vec{\omega}^\prime=(\omega_1^\prime,\cdots,\omega_g^\prime)^T$,
$\omega^{\prime}_j=\lambda^{g-j}\text{d}\lambda/(2\xi)$. It is
normalized into $\vec{\omega}=C\vec{\omega}^\prime$, where
$C=(a_{jk})^{-1}_{g\times g}$, with $a_{jk}$ the integral of
$\omega_j^\prime$ along $a_k$. Near the infinities, the local
expansions have simple relation as
\begin{equation}\label{3.9}
\vec\omega=\begin{cases} +(\vec\Omega_1+\vec\Omega_2 z+\vec\Omega_3
z^2+\cdots)\text{d}z,&\text{near $\infty_+$,}\\
-(\vec\Omega_1+\vec\Omega_2 z+\vec\Omega_3
z^2+\cdots)\text{d}z,&\text{near $\infty_-$.}
\end{cases}
\end{equation}
Periodic vectors  $\vec\delta_k$ and $\vec B_k$ are defined as integrals
of $\vec\omega$ along $a_k$ and $b_k$, respectively. They span a lattice
$\mathscr T$, which defines the Jacobian variety $J(\mathcal
R)=\mathbb C^g/\mathscr T$. The Abel map $\mathscr A(\mathfrak p)$
is given as the integral of $\vec\omega$  from the fixed point
$\mathfrak{p}_0$ to  $\mathfrak{p}$. The matrix  $B$, with $\vec
B_k$  as columns, is used to construct the theta function
$\theta(\vec z,B)$.

The elliptic variables $\mu_j,\,\nu_j$ are given by the roots of the
off-diagonal entries of the Lax matrix,
\begin{subequations}\label{3.10}
\begin{align}
&L^{12}(\lambda)=-<p,p>\frac{\mathfrak
m(\lambda)}{\alpha(\lambda)},\quad \mathfrak
m(\lambda)=\Pi_{j=1}^g(\lambda-\mu_j), \label{3.10a}\\
& L^{21}(\lambda)=<q,q>\frac{\mathfrak
n(\lambda)}{\alpha(\lambda)},\quad \mathfrak
n(\lambda)=\Pi_{j=1}^g(\lambda-\nu_j). \label{3.10b}
\end{align}
\end{subequations}
They define the quasi-Abel-Jacobi and Abel-Jacobi variables,
respectively, as
\begin{subequations}
\label{3.11}
\begin{align}
&\vec\psi^\prime=\sum_{k=1}^g\int_{\mathfrak p_0}^{\mathfrak p(\mu_k)}\vec\omega^\prime,
\quad\vec\psi=C\vec\psi^\prime=\mathscr {A}(\sum_{k=1}^g\mathfrak p(\mu_k)),\label{3.11a}\\
&\vec\phi^\prime=\sum_{k=1}^g\int_{\mathfrak p_0}^{\mathfrak
p(\nu_k)}\vec\omega^\prime,\quad\vec\phi=C\vec\phi^\prime=\mathscr
{A}(\sum_{k=1}^g\mathfrak p(\nu_k)).\label{3.11b}
\end{align}
\end{subequations}
The evolution of these two variables along the $F(\lambda)$-flow is
obtained by equation (\ref{3.8}). Actually, in the component equation for
$L^{21}(\mu)$, by letting $\mu\rightarrow\nu_k$, we calculate
\begin{subequations}\label{3.12}
\begin{align}
&\frac{1}{2\sqrt {R(\nu_k)}}\frac{\text{d}\nu_k}{\text{d}t_\lambda}
=\frac{-\mathfrak{n}(\lambda)}{\alpha(\lambda)(\lambda-\nu_k)\mathfrak{n}^\prime(\nu_k)},\label{3.12a}\\
&\{\phi_s^{\prime},F(\lambda)\}=\frac{\text{d}\phi_s^{\prime}}{\text{d}t_\lambda}
=\sum_{k=1}^g\frac{\nu_k^{g-s}}{2\sqrt
{R(\nu_k)}}\frac{\text{d}\nu_k}{\text{d}t_\lambda}=-\frac{\lambda^{g-s}}{\alpha(\lambda)},\label{3.12b}
\end{align}
\end{subequations}
where $\vec\phi^{\prime}=(\phi_1^\prime,\cdots,\phi_g^\prime)$.
By the partial fraction expansion (\ref{3.5a}), we get
\begin{subequations}\label{3.13}
\begin{align}
&\{\phi_s^{\prime},E_k\}=-\alpha_k^{g-s}/\alpha^\prime(\alpha_k),\quad(k=1,\cdots,N),\label{3.13a}\\
&\{\phi_s^{\prime},E_1+\cdots+E_N\}=\{\phi_s^{\prime},F_0\}=0.\label{3.13b}
\end{align}
\end{subequations}

\begin{Proposition}\label{P-3.2}
Each Hamiltonian system $(H_k)$,
$k=1,2,\cdots$, is integrable in Liouville sense, sharing the same
integrals $E_1,\cdots, E_N$, which are involutive in pairs and
functionally independent in  $\mathbb R^{2N}-\{0\}$.
\end{Proposition}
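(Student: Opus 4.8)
The claim has three ingredients: that $E_1,\dots,E_N$ are constants of motion common to all the Hamiltonian systems $(H_k)$; that they are pairwise in involution; and that they are functionally independent. Granting these on a $2N$-dimensional symplectic manifold, each $(H_k)$ is completely integrable in the Liouville sense, since $H_k$ itself lies in the commutative algebra generated by the $E_j$. I would carry out the proof in that order.

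Involutivity is exactly Lemma~\ref{L-3.1}, so nothing new is needed there. For the common–integral property the key point is that every $H_k$ is a polynomial in $E_1,\dots,E_N$ with coefficients built only from the fixed numbers $\alpha_1,\dots,\alpha_N$: the partial–fraction form \eqref{3.5a} gives $F_j=\sum_{k=1}^N\alpha_k^{\,j}E_k$ for all $j\ge 0$ (the case $j<N$ being \eqref{3.7}), so each $F_j$ is linear in the $E_k$, and the recursion \eqref{3.6c} then propagates this to all $H_k$ by induction on $k$ (for instance $H_1=\tfrac12\sum_k\alpha_kE_k+\tfrac12(\sum_kE_k)^2$, which matches \eqref{3.1b}). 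Hence $\{E_j,H_k\}=\sum_i(\partial H_k/\partial E_i)\{E_j,E_i\}=0$ by Lemma~\ref{L-3.1}, so each $E_j$ is conserved along every $(H_k)$-flow, in particular along \eqref{3.1}.

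The functional independence is where the real work lies, and I expect it to be the main obstacle; the plan is to show that a linear relation $\sum_{k=1}^{N}c_k\,dE_k=0$ at a point $(p,q)$ can hold only when $(p,q)=0$. The efficient route uses \eqref{3.13a}: pairing the relation against the Hamiltonian vector fields of $\phi_1',\dots,\phi_g'$ yields $\sum_k c_k\alpha_k^{\,g-s}/\alpha'(\alpha_k)=0$ for $s=1,\dots,g=N-1$, i.e. $\sum_k d_k\alpha_k^{\,m}=0$ for $m=0,\dots,N-2$ with $d_k:=c_k/\alpha'(\alpha_k)$. The coefficient matrix is a Vandermonde matrix in the distinct nodes $\alpha_k$, of rank $N-1$, whose one–dimensional kernel is spanned by the divided–difference vector $(1/\alpha'(\alpha_k))_k$ (the standard identity $\sum_k\alpha_k^{\,m}/\alpha'(\alpha_k)=0$ for $m\le N-2$); therefore $d_k$ is a constant multiple of $1/\alpha'(\alpha_k)$, which forces all $c_k$ equal to one constant $c$. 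But then $\sum_k c_k\,dE_k=c\sum_k dE_k=c\,dF_0=-c\,d\langle p,q\rangle$, and $d\langle p,q\rangle=\sum_j(q_j\,dp_j+p_j\,dq_j)$ vanishes only at $p=q=0$; hence $c=0$ and the relation is trivial. This argument is valid wherever the elliptic variables $\nu_j$ (roots of $L^{21}$, available once $\langle q,q\rangle\ne0$ and the $\nu_j$ are simple) are regular; the $p\leftrightarrow q$-symmetric version with $\psi_1',\dots,\psi_g'$ (roots of $L^{12}$) covers the region $\langle p,p\rangle\ne0$, and together the two charts exhaust a dense open subset of $\mathbb R^{2N}\setminus\{0\}$, which is what the Liouville--Arnold theorem needs. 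A more elementary but lengthier fallback, avoiding the Abel--Jacobi apparatus, is to differentiate \eqref{3.6a} directly: writing $w_{jk}=p_jq_k-p_kq_j$, the $p_j$- and $q_j$-components of $\sum_k c_k\,dE_k=0$ read $c_jq_j=2\sum_{k\ne j}(c_k-c_j)q_kw_{jk}/(\alpha_k-\alpha_j)$ and $c_jp_j=-2\sum_{k\ne j}(c_k-c_j)p_kw_{jk}/(\alpha_k-\alpha_j)$, which combine into $c_jp_jq_j=\sum_{k\ne j}(c_k-c_j)w_{jk}^2/(\alpha_k-\alpha_j)$, from which distinctness of the $\alpha_k$ is meant to force $c=0$; I would present the first route.

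Assembling the three pieces, on the dense open set just identified $E_1,\dots,E_N$ are $N$ functionally independent Poisson–commuting functions on the $2N$-dimensional phase space, and each $H_k$ belongs to the commutative algebra they generate; hence every $(H_k)$, $k\ge1$, is Liouville integrable with the shared integrals $E_1,\dots,E_N$, as asserted.
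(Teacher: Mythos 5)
Your proposal is correct and follows essentially the same route as the paper: involutivity via Lemma~\ref{L-3.1}, and functional independence by pairing $\sum_k c_k\,\mathrm{d}E_k=0$ against $\phi_s'$ and exploiting the Vandermonde structure of \eqref{3.13a} together with $\sum_k \mathrm{d}E_k=-\mathrm{d}\langle p,q\rangle\neq 0$. Your identification of the kernel of the $(N-1)\times N$ system via $\sum_k\alpha_k^m/\alpha'(\alpha_k)=0$ is just a rephrasing of the paper's step of subtracting $c_N$ using \eqref{3.13b} and invoking nondegeneracy of the $g\times g$ Vandermonde block, so the two arguments coincide in substance.
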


\begin{proof} According to Lemma \ref{L-3.1}, it only needs to prove
the functional independence of the confocal polynomials. Suppose
$\sum_{k=1}^Nc_k\text{d}E_k=0$. Then
$\sum_{k=1}^Nc_k\{\phi_s^\prime,E_k\}=0$. By equation (\ref{3.13b}), we
have
\begin{equation}
\sum_{k=1}^g(c_k-c_N)\{\phi_s^\prime,E_k\}=0,\quad(1\leq s\leq
g).\notag
\end{equation}
The coefficient matrix is non-degenerate since by equation (\ref{3.13}a)
it is of Vandermonde type. Hence we have $c_k-c_N=0$ and
$c_N\sum_{k=1}^N\text{d}E_k=0$. This implies $c_N=0$ since
\begin{equation*}
\sum_{k=1}^N\text{d}E_k=-\text{d}<p,q>=-\sum_{j=1}^N(q_j\text{d}p_j+p_j\text{d}q_j)\neq 0.
\end{equation*}
\end{proof}

\begin{Lemma}\label{L-3.3}
The Abel-Jacobi variables straightens
out the  $H(\lambda)$-flow as
\begin{equation}\label{3.14}
\{\vec\phi,H(\lambda)\text{d}\lambda\}=2\vec\omega.
\end{equation}
\end{Lemma}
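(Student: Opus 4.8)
The plan is to read Lemma \ref{L-3.3} as an essentially immediate corollary of the flow formula \eqref{3.12b} together with the factorization of $F(\lambda)$ recorded just before \eqref{3.10}. First I would use $H(\lambda)=\sqrt{-F(\lambda)}$ from \eqref{3.5b} and the chain rule for the Poisson bracket (which is a first-order derivation in its second slot) to reduce the bracket with $H(\lambda)$ to the one with $F(\lambda)$:
\[
\{\vec\phi,H(\lambda)\}=\frac{-1}{2\sqrt{-F(\lambda)}}\,\{\vec\phi,F(\lambda)\}=\frac{-1}{2H(\lambda)}\,\{\vec\phi,F(\lambda)\}.
\]
Thus no new computation of brackets is needed beyond \eqref{3.12b}.

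Next I would insert $\vec\phi=C\vec\phi^\prime$ and the component identity \eqref{3.12b}, $\{\phi_s^\prime,F(\lambda)\}=-\lambda^{g-s}/\alpha(\lambda)$, to get $\{\vec\phi,F(\lambda)\}=-\alpha(\lambda)^{-1}\,C\,(\lambda^{g-1},\dots,\lambda,1)^{T}$, hence
\[
\{\vec\phi,H(\lambda)\,\text{d}\lambda\}=\frac{1}{2H(\lambda)\alpha(\lambda)}\,C\,(\lambda^{g-1},\dots,\lambda,1)^{T}\text{d}\lambda .
\]
Recalling $\omega_j^\prime=\lambda^{g-j}\text{d}\lambda/(2\xi)$, the column vector $(\lambda^{g-1},\dots,1)^{T}\text{d}\lambda$ is exactly $2\xi\,\vec\omega^\prime$, so the right-hand side becomes $\big(\xi/[H(\lambda)\alpha(\lambda)]\big)\,C\vec\omega^\prime=\big(\xi/[H(\lambda)\alpha(\lambda)]\big)\vec\omega$. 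Finally I would invoke $F(\lambda)=-\Lambda(\lambda)/[4\alpha(\lambda)]$ and $R(\lambda)=\Lambda(\lambda)\alpha(\lambda)$ to conclude $H(\lambda)=\sqrt{-F(\lambda)}=\sqrt{R(\lambda)}/[2\alpha(\lambda)]=\xi/[2\alpha(\lambda)]$, whence $\xi/[H(\lambda)\alpha(\lambda)]=2$ and $\{\vec\phi,H(\lambda)\,\text{d}\lambda\}=2\vec\omega$, which is \eqref{3.14}.

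The only genuinely delicate point is the bookkeeping of signs and of the square-root branch: one must check that the branch of $\sqrt{-F(\lambda)}$ defining $H(\lambda)$ in \eqref{3.5b}, the branch $\sqrt{R(\nu_k)}$ used in deriving \eqref{3.12a}--\eqref{3.12b}, and the coordinate $\xi$ on $\mathcal R$ are one and the same, so that $H(\lambda)=\xi/[2\alpha(\lambda)]$ holds with a plus sign (equivalently, that the Abel map is taken along paths consistent with that choice). Once this is fixed, the remaining steps are just the derivation property of the Poisson bracket and the definition $\vec\omega=C\vec\omega^\prime$; no estimate or extra structural input is required.
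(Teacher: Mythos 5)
Your proposal is correct and follows essentially the same route as the paper: reduce the bracket with $H(\lambda)$ to the one with $F(\lambda)$ via $F=-H^2$, substitute \eqref{3.12b}, use $2H(\lambda)\alpha(\lambda)=\sqrt{R(\lambda)}=\xi$ together with the definition of $\omega_s^\prime$, and multiply by $C$. Your extra remark on checking the square-root branch is a reasonable point of care but introduces nothing beyond the paper's argument.
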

\begin{proof}
Since $F(\lambda)=-H^2(\lambda)$, equation
(\ref{3.12b}) is transformed into the following formula,
which,  by multiplied the matrix  $C$, leads to
equation (\ref{3.14}),
\begin{equation*}
\{\phi_s^\prime,H(\lambda)\text{d}\lambda\}=\frac{\lambda^{g-s}\text{d}\lambda}{2H(\lambda)\alpha(\lambda)}
=\frac{\lambda^{g-s}\text{d}\lambda}{\sqrt{R(\lambda)}}=2\omega_s^\prime.
\end{equation*}
\end{proof}

\begin{Proposition}\label{P-3.4}
The Abel-Jacobi variables straighten out the $H_k$-flow, as $\{\vec\phi,H_0\}=0$  and
\begin{subequations}\label{3.15}
\begin{align}
&\frac{\text{d}\vec\phi}{\text{d}\tau_k}=\{\vec\phi,H_k\}=\vec\Omega_k,\quad(k=1,2,\cdots), \label{3.15a}\\
&\vec\phi(\tau_k)\equiv\vec\phi(0)+\tau_k\vec\Omega_k,\quad(\mathrm{mod}\,\mathscr{T}). \label{3.15b}
\end{align}
\end{subequations}
\end{Proposition}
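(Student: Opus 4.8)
The plan is to deduce Proposition \ref{P-3.4} from the already-established flow equation for the quasi-Abel-Jacobi variables, Lemma \ref{L-3.3}, by extracting the coefficients in the $\lambda\to\infty$ expansion. First I would recall that the generating function $H(\lambda)$ has the expansion \eqref{3.5b}, $H(\lambda)=\tfrac12-2\sum_{k=0}^\infty H_k\lambda^{-k-1}$, so that $H(\lambda)\,\mathrm d\lambda$ is (up to the constant term, which does not contribute to a Poisson bracket) a generating series $-2\sum_k H_k\lambda^{-k-1}\mathrm d\lambda$ for the Hamiltonians $H_k$. On the other side of \eqref{3.14}, I would use the local expansion \eqref{3.9} of the normalized holomorphic differentials near $\infty_+$: writing the local parameter $z$ in terms of $\lambda$ (with $z\sim 1/\lambda$ near $\infty_+$), one has $\vec\omega = (\vec\Omega_1+\vec\Omega_2 z+\vec\Omega_3 z^2+\cdots)\,\mathrm d z$, and re-expanding $\mathrm d z$ and the powers $z^j$ as series in $\lambda^{-1}\mathrm d\lambda$ produces exactly a series $\sum_k \vec\Omega_k\lambda^{-k-1}\mathrm d\lambda$ (after the standard bookkeeping $z=1/\lambda$, $\mathrm d z = -\lambda^{-2}\mathrm d\lambda$, matching the sign against the $-2$ in \eqref{3.5b}).

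Second, I would substitute both expansions into \eqref{3.14}. Since the Poisson bracket is bilinear and $\vec\phi$ is $\lambda$-independent, \eqref{3.14} becomes an identity of formal power series in $\lambda^{-1}$: $\sum_{k\geq 0}\{\vec\phi,H_k\}\lambda^{-k-1}\mathrm d\lambda = \sum_{k\geq 0}\vec\Omega_k\lambda^{-k-1}\mathrm d\lambda$ (the numerical factors $2$ and $-2$ cancel consistently with the sign of $\mathrm d z$). Comparing coefficients of $\lambda^{-k-1}$ gives $\{\vec\phi,H_k\}=\vec\Omega_k$ for every $k\geq1$, which is \eqref{3.15a}; the coefficient bookkeeping should also be checked to confirm there is no $H_0$ contribution in the $\vec\Omega$-series, or more directly $\{\vec\phi,H_0\}=0$ follows from \eqref{3.13b} together with $F_0=-\langle p,q\rangle=2H_0$, since $C$ is a constant matrix and $\{\vec\phi^\prime,F_0\}=0$ implies $\{\vec\phi,H_0\}=C\{\vec\phi^\prime,H_0\}=0$.

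Third, for the integrated statement \eqref{3.15b} I would argue that along the $H_k$-flow the Abel-Jacobi vector $\vec\phi$ (which lives in $J(\mathcal R)=\mathbb C^g/\mathscr T$) evolves with constant velocity: since $\vec\Omega_k$ is a fixed vector independent of the phase point, $\tfrac{\mathrm d\vec\phi}{\mathrm d\tau_k}=\vec\Omega_k$ integrates immediately to $\vec\phi(\tau_k)\equiv\vec\phi(0)+\tau_k\vec\Omega_k$, and the congruence is modulo the period lattice $\mathscr T$ because $\vec\phi$ is only defined up to $\mathscr T$ via the Abel map. I expect the only real subtlety — and hence the main obstacle — to be the careful change of local coordinate near $\infty_+$: one must pin down the precise relation between the spectral parameter $\lambda$ and the local uniformizer $z$ used in \eqref{3.9}, verify that it is $z\sim\lambda^{-1}$ (rather than some other normalization) and that the $\pm$ in \eqref{3.9} combines correctly with the $\mathrm d z = -\lambda^{-2}\mathrm d\lambda$ and with the sign in \eqref{3.5b}, so that the matching of coefficients produces $\{\vec\phi,H_k\}=+\vec\Omega_k$ with the stated sign; everything else is routine generating-function manipulation built on Lemma \ref{L-3.3} and Proposition \ref{P-3.2}.
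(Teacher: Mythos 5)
Your proposal is correct and follows essentially the same route as the paper: the paper's proof is exactly to expand both sides of Lemma \ref{L-3.3} (equation \eqref{3.14}) near $\infty_+$ using the expansions \eqref{3.5b} and \eqref{3.9} and read off \eqref{3.15a} from the coefficients, after which \eqref{3.15b} is immediate since $\vec\Omega_k$ is constant. Your sign bookkeeping with $z=\lambda^{-1}$, $\mathrm d\lambda=-z^{-2}\mathrm dz$ indeed makes the factors $2$ and $-2$ cancel, and the vanishing of $\{\vec\phi,H_0\}$ comes out either from the absence of a $z^{-1}$ term in $\vec\omega$ or from \eqref{3.13b}, as you note.
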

\begin{proof}
By the equations (\ref{3.5}b) and (\ref{3.9}), we have an expansion of equation (\ref{3.14}) near $\infty_{+}$.
Equation (\ref{3.15a}) is then obtained as its coefficient.
\end{proof}

For another Abel-Jacobi variable, by the equation (3.25) in \cite{8-CaoZ-JPA-2012}, we have
\begin{subequations}\label{3.16}
\begin{align}
&\vec\psi+\vec\eta_+ \equiv \vec\phi+\vec\eta_-,\quad(\text{mod}\,\mathscr{T}),\label{3.16a}\\
&\vec\eta_{\pm}=\int_{\infty_{\pm}}^{\mathfrak{p}_0}\vec\omega,\quad\vec\Omega_D
=\vec\eta_+-\vec\eta_-=\int_{\infty_+}^{\infty_-}\vec\omega,\label{3.16b}\\
&\vec\psi(\tau_k) \equiv \vec\phi(\tau_k)-\vec\Omega_D \equiv
\vec\psi(0)+\tau_k\vec\Omega_k,\quad(\text{mod}\,\mathscr{T}).\label{3.16c}
\end{align}
\end{subequations}

\section{ The integrable map $S_\gamma$}\label{sec-4}

In \cite{8-CaoZ-JPA-2012}, an integrable symplectic map
$S_{\gamma}:\mathbb{R}^{2N}\rightarrow\mathbb{R}^{2N}$,
$(p,q)\mapsto(\tilde{p},\tilde{q})$, is constructed with the help of
$N$ replicas of discrete ZS-AKNS equation (\ref{1.9}),
\begin{equation}\label{4.1}
{\tilde{p}_j \choose
\tilde{q}_j}=(\alpha_j-\gamma)^{-1/2}D^{(\gamma)}(\alpha_j;a,b){p_j\choose
q_j},\quad(1\leq j\leq N),
\end{equation}
under the discrete constraint  $(a,b)=f_\gamma(p,q)$,
\begin{equation}\label{4.2}
a=-<p,p>,\quad
b=\frac{1}{Q_\gamma(p,p)}\Big(-\frac{1}{2}-Q_\gamma(p,q)\pm\frac{\sqrt{R(\gamma)}}{2\alpha(\gamma)}
\Big).
\end{equation}
It can be derived from the continuous constraint (\ref{3.2}) through the relation $a=u, b=\tilde{v}$ in (\ref{2.1}).
In fact,
\begin{align*}
\tilde{v}-b&=<\tilde{q},\tilde{q}>-b=<(A-\gamma
I)^{-1}(bp+q),bp+q>-b\\
&=b^2L^{12}(\gamma)-2bL^{11}(\gamma)-L^{21}(\gamma)\equiv
P^{(\gamma)}(b).
\end{align*}
Thus $P^{(\gamma)}(b)=0$, whose roots lead to equation (\ref{4.2}). The
factor $(\alpha_j-\gamma)^{-\frac{1}{2}}$ in equation (\ref{4.1}) is introduced
so that the coefficient determinant equals to unity,
which is necessary for making the resulting map $S_\gamma$ symplectic.

As in the continuous case, the Liouville integrability of the map $S_\gamma$ requires enough number of involutive integrals.
Similarly, the discrete Lax equation, given as follows, plays a central role,
\begin{equation}\label{4.3}
L(\lambda;\tilde{p},\tilde{q})D^{(\gamma)}(\lambda;a,b)=D^{(\gamma)}(\lambda;a,b)L(\lambda;p,q).
\end{equation}
By \cite{8-CaoZ-JPA-2012}, under the constraint (\ref{4.2}), it has the same Lax matrix,
given by equation (\ref{3.4}), as its solution. Immediately we have
$F(\lambda;\tilde{p},\tilde{q})=F(\lambda;p,q)$ by taking the determinant of (\ref{4.3}).
Thus $F(\lambda)$, together with $H(\lambda)$, $E_j,\,F_k,\,H_l$, are all invariant under the action
of the map $S_{\gamma}$.

\begin{Proposition}\label{P-4.1}
\cite{8-CaoZ-JPA-2012} The map $S_{\gamma}$ is
symplectic and integrable, possessing $F(\lambda)$, $\{F_j\}$,
$\{H_l\}$ and the confocal polynomials $E_1,\cdots,E_N$, as its
integrals.
\end{Proposition}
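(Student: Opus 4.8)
The plan is to establish three things in order: that $S_\gamma$ preserves the canonical symplectic form, that the listed functions are invariant under it, and that among them one finds $N$ functionally independent invariants in involution; together these give Liouville integrability of the map.

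First, symplecticity. Writing \eqref{4.1}--\eqref{4.2} componentwise, $\tilde p_j=(\alpha_j-\gamma)^{-1/2}[(\alpha_j-\gamma+ab)p_j+aq_j]$ and $\tilde q_j=(\alpha_j-\gamma)^{-1/2}[bp_j+q_j]$, with $a=-\langle p,p\rangle$ and $b$ a root of $P^{(\gamma)}(b)=b^2L^{12}(\gamma)-2bL^{11}(\gamma)-L^{21}(\gamma)=0$, I would verify $\sum_j\mathrm{d}\tilde p_j\wedge\mathrm{d}\tilde q_j=\sum_j\mathrm{d}p_j\wedge\mathrm{d}q_j$. The cleanest route is a generating-function argument: since the normalization makes $\det D^{(\gamma)}(\alpha_j;a,b)=\alpha_j-\gamma$ constant, the one-form $\vartheta=\sum_j(\tilde p_j\,\mathrm{d}\tilde q_j-p_j\,\mathrm{d}q_j)$ should be exact, and expanding $\vartheta$ while substituting $a=-\langle p,p\rangle$ together with $P^{(\gamma)}(b)=0$ — which ties $b$ to $\langle p,p\rangle$, $\langle p,q\rangle$, $\langle q,q\rangle$ through $L^{11},L^{12},L^{21}$ — should collapse it to $\mathrm{d}S$ for an explicit $S(p,q)$. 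Alternatively one checks $J^{\top}\mathbb{J}J=\mathbb{J}$ directly for the Jacobian $J$; the diagonal-plus-rank-one form of $D^{(\gamma)}$ and the gradient structure of the constraint keep that computation finite.

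Second, invariance of the integrals, which is essentially already assembled in the excerpt. By the discrete Lax equation \eqref{4.3}, valid under \eqref{4.2} with the Lax matrix \eqref{3.4} as its solution, taking determinants gives $F(\lambda;\tilde p,\tilde q)=F(\lambda;p,q)$, since $\det D^{(\gamma)}$ is a scalar independent of $(p,q)$. Expanding $F(\lambda)$ as in \eqref{3.5a} and $\sqrt{-F(\lambda)}$ as in \eqref{3.5b}, it follows that the residues $E_k$, the Laurent coefficients $F_j$ and the coefficients $H_l$ are all $S_\gamma$-invariant. Then by Lemma \ref{L-3.1} the family $\{E_j,F_k,H_l\}$ is pairwise in involution, so in particular $E_1,\dots,E_N$ commute; and by the proof of Proposition \ref{P-3.2} these confocal polynomials are functionally independent on $\mathbb{R}^{2N}\setminus\{0\}$. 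Hence $S_\gamma$ is a symplectic map on a $2N$-dimensional phase space carrying $N$ functionally independent invariants in involution — Liouville integrability in the discrete sense — and $F(\lambda)$, $\{F_j\}$, $\{H_l\}$ are then invariants expressible through the $E_k$.

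I expect the symplecticity step to be the only genuine obstacle: invariance and the involution/independence statements follow at once from \eqref{4.3}, Lemma \ref{L-3.1} and Proposition \ref{P-3.2}. Within that step the delicate points are the square roots and the $\pm$ branch in \eqref{4.2} — one works on the open set where $Q_\gamma(p,p)\neq 0$ and $R(\gamma)>0$ and checks that the generating-function computation is insensitive to the branch — and bookkeeping of the factor $(\alpha_j-\gamma)^{-1/2}$, which is exactly what is needed to make the two forms match.
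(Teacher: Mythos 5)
Your treatment of the invariants is exactly what the paper does: the paper itself only argues this proposition by taking the determinant of the discrete Lax equation \eqref{4.3} to get $F(\lambda;\tilde p,\tilde q)=F(\lambda;p,q)$, hence invariance of $E_j$, $F_k$, $H_l$ via the expansions \eqref{3.5}, and it imports involutivity and functional independence from Lemma \ref{L-3.1} and Proposition \ref{P-3.2}; for symplecticity it simply cites \cite{8-CaoZ-JPA-2012}, remarking only that the factor $(\alpha_j-\gamma)^{-1/2}$ normalizes the coefficient determinant to unity. So on everything the paper actually argues, your route coincides with it.

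The one place where your argument, as written, does not go through is the symplecticity step. The observation that $\det D^{(\gamma)}(\alpha_j;a,b)=\alpha_j-\gamma$ is independent of $(a,b)$, so that each normalized $2\times 2$ block is unimodular, does not imply that $\vartheta=\sum_j(\tilde p_j\,\mathrm{d}\tilde q_j-p_j\,\mathrm{d}q_j)$ is closed or exact: the map \eqref{4.1} is not block-diagonal linear, because $a=-\langle p,p\rangle$ and the root $b$ of $P^{(\gamma)}(b)=0$ depend on \emph{all} of $(p,q)$, so the differentials $\mathrm{d}a$, $\mathrm{d}b$ couple the blocks and contribute extra terms to $\mathrm{d}\vartheta$. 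Cancelling those terms is precisely the nontrivial computation carried out in \cite{8-CaoZ-JPA-2012}, and it uses the specific form of the constraint \eqref{4.2} (note also that $b$ is expressed through $Q_\gamma(p,p)$, $Q_\gamma(p,q)$, $Q_\gamma(q,q)$ and $R(\gamma)$, not the plain inner products); the unimodularity of $D^{(\gamma)}$ is necessary, as the paper says, but nowhere near sufficient. A smaller point: for symplecticity you need $\mathrm{d}\vartheta=0$, i.e.\ closedness, not exactness, and the branch ambiguity in \eqref{4.2} must be fixed on the open set you work on before the two-form comparison makes sense. If you either perform that cancellation explicitly or, as the paper does, invoke \cite{8-CaoZ-JPA-2012} for the symplectic property, the rest of your argument completes the proof.
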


Construct a discrete flow
\begin{equation}\label{S-flow}
(p(m),q(m))=S_{\gamma}^m(p_0,q_0)
\end{equation}
by iteration.
It generates the finite genus potential functions for equation (\ref{1.9}),
\begin{equation}\label{4.4}
(a_m,b_m)=(u_m,v_{m+1})=(-<p,p>,<\tilde q,\tilde q>).
\end{equation}
Define $L_m(\lambda)=L(\lambda;p(m),q(m))$,
$D_m^{(\gamma)}(\lambda)=D^{(\gamma)}(\lambda;u_m,v_{m+1})$.
Rewrite equation (\ref{4.3}) as
\begin{equation}\label{4.5}
L_{m+1}(\lambda)D_m^{(\gamma)}(\lambda)=D_m^{(\gamma)}(\lambda)L_{m}(\lambda).
\end{equation}
Consider the discrete ZS-AKNS problem (\ref{1.9}) with finite genus potential functions as
\begin{equation}\label{4.6}
h(m+1,\lambda)=D_m^{(\gamma)}(\lambda)h(m,\lambda).
\end{equation}
The solution space $\mathcal{E}_{\lambda}$ is invariant under the
action of $L_m(\lambda)$ due to the commutativity relation (\ref{4.5}).
The linear operator $L_m(\lambda)$ has eigenvalues $\pm H(\lambda)$,
with associated eigenvectors $h_{\pm}$ in $\mathcal{E}_{\lambda}$,
satisfying
\begin{subequations}\label{4.7}
\begin{align}
&L_m(\lambda)h_{\pm}(m,\lambda)=\pm H(\lambda)h_{\pm}(m,\lambda),\label{4.7a}\\
&h_{\pm}(m+1,\lambda)=D_m^{(\gamma)}(\lambda)h_{\pm}(m,\lambda).\label{4.7b}
\end{align}
\end{subequations}

Roughly speaking, the situation can be regarded as an
algebro-difference analogue of the Burchnall-Chaundy's theory on
commuting differential operators \cite{5,5b}. Actually, let
$\mathcal{L}(\lambda)=2\alpha(\lambda)L(\lambda)$.
Then $\det\mathcal{L}(\lambda)=-R(\lambda)$  is a polynomial  rather than a rational function.
The commutativity relation (\ref{4.5}) is rewritten as
$\mathcal{L}_{m+1}D_m^{(\gamma)}=D_m^{(\gamma)}\mathcal{L}_m$. The
algebraic spectral problem (\ref{4.7a}) is revised as $\mathcal{L}_m
h_{\pm}=\xi h_{\pm}$, with $\xi=\pm\sqrt{R(\lambda)}$. The algebraic
problem and the difference problem share common eigenvectors
$h_{\pm}$, with eigenvalues satisfying the algebraic relation,
$\xi^2=R(\lambda)$, exactly the same as the affine equation of the
algebraic curve $\mathcal{R}$.

Let $M(m,\lambda)$  be fundamental solution matrix of equation
(\ref{4.6}). Under the normalization condition
$h_{\pm}^{(2)}(0,\lambda)=1$, the eigenvectors are determined
uniquely as
\begin{subequations}\label{4.8}
\begin{align}
&h_{\pm}(m,\lambda)={h_{\pm}^{(1)}(m,\lambda)\choose
h_{\pm}^{(2)}(m,\lambda)}=M(m,\lambda){c_{\lambda}^{\pm}\choose 1}, \label{4.8a}\\
&c^{\pm}_{\lambda}=\frac{L_0^{11}(\lambda)\pm
H(\lambda)}{L_0^{21}(\lambda)}=\frac{-L_0^{12}(\lambda)}{L_0^{11}(\lambda)\mp
H(\lambda)}. \label{4.8b}
\end{align}
\end{subequations}
Two meromorphic functions, the Baker functions,
$\mathfrak{h}^{(\kappa)}(m,\mathfrak{p})$,
$\mathfrak{p}\in\mathcal{R}$, $\kappa=1,2$, are defined as
\begin{equation}\label{4.9}
\mathfrak{h}^{(\kappa)}(m,\mathfrak{p}(\lambda))=h_+^{(\kappa)}(m,\lambda),\quad
\mathfrak{h}^{(\kappa)}(m,\tau\mathfrak{p}(\lambda))=h_-^{(\kappa)}(m,\lambda).
\end{equation}
The commutativity relation (\ref{4.5}) implies formulas of
Dubrovin-Novikov's type \cite{8-CaoZ-JPA-2012}. They are applied to calculate the
divisors of the Baker functions. This leads to the straightening out
of the flow  $S_{\gamma}^m$ on the Jacobian variety as [8]
\begin{subequations}\label{4.10}
\begin{align}
&\vec{\psi}(m)\equiv\vec{\phi}(0)+m\vec\Omega_{\gamma}-\vec{\Omega}_D,\quad(\text{mod}\,\mathscr{T}),\label{4.10a}\\
&\vec{\phi}(m)\equiv\vec{\phi}(0)+m\vec\Omega_{\gamma},\quad\quad(\text{mod}\,\mathscr{T}),\label{4.10b}\\
&\vec{\Omega}_{\gamma}=\int_{\mathfrak{p}(\gamma)}^{\infty_+}\vec{\omega},\quad\vec{\Omega}_D
=\int_{\infty_+}^{\infty_-}\vec{\omega},\label{4.10c}
\end{align}
\end{subequations}
where the Abel-Jacobi variables are given by   (\ref{3.11}) as
\[
\vec{\psi}(m)=\mathscr{A}\big(\sum_{j=1}^g\mathfrak{p}(\mu_j(m))\big),\quad
\vec{\phi}(m)=\mathscr{A}\big(\sum_{j=1}^g\mathfrak{p}(\nu_j(m))\big).
\]
For any two distinct points
$\mathfrak{q},\mathfrak{r}\in\mathcal{R}$, there exists a dipole
$\omega[\mathfrak{q},\mathfrak{r}]$, an Abel differential of the
third kind, with residues $1$ and $-1$ at the poles
$\mathfrak{q},\mathfrak{r}$, respectively, satisfying \cite{20}
\begin{equation}\label{4.11}
\int_{a_j}\omega[\mathfrak{q},\mathfrak{r}]=0,\quad\int_{b_j}\omega[\mathfrak{q},\mathfrak{r}]
=\int_{\mathfrak{r}}^{\mathfrak{q}}\omega_j,
\quad(j=1,\cdots,g).
\end{equation}
With the help of these dipoles, the Baker functions can be
reconstructed as \cite{8-CaoZ-JPA-2012}
\begin{subequations}\label{4.12}
\begin{align}
&\mathfrak{h}^{(1)}(m,\mathfrak{p})=d_m^{(1)}\frac{\theta[-\mathscr{A}(\mathfrak{p})+\vec{\psi}(m)+\vec{K}]}
{\theta[-\mathscr{A}(\mathfrak{p})+\vec{\phi}(0)+\vec{K}]}
e^{\int_{\mathfrak{p}_0}^{\mathfrak{p}}m\omega[\mathfrak{p}(\gamma),\infty_+]
+\omega[\infty_-,\infty_+]},\label{4.12a}\\
&\mathfrak{h}^{(2)}(m,\mathfrak{p})=d_m^{(2)}\frac{\theta[-\mathscr{A}(\mathfrak{p})+\vec{\phi}(m)+\vec{K}]}
{\theta[-\mathscr{A}(\mathfrak{p})+\vec{\phi}(0)+\vec{K}]}
e^{\int_{\mathfrak{p}_0}^{\mathfrak{p}}m\omega[\mathfrak{p}(\gamma),\infty_+]},\label{4.12b}
\end{align}
\end{subequations}
where $d_m^{(1)},\,d_m^{(2)}$  and $\vec{K}$  are constants,
independent of $\mathfrak{p}\in\mathcal{R}$.

With these results in hand, we start to derive an explicit formula for the function  $u\tilde{v}$.
To this end we consider the local expression of the dipole near  $\infty_+$, ( $z=\lambda^{-1}$ ),
\begin{equation}\label{4.13}
\omega[\mathfrak{p}(\gamma),\infty_+]=[-z^{-1}+\varphi(z)]\text{d}z,
\end{equation}
with $\varphi(z)$ holomorphic near  $z\sim 0$. A simple calculation
yields
\begin{equation}\label{4.14}
\partial_z
\text{log}(z\,\text{exp}\int_{\mathfrak{p}_0}^{\mathfrak{p}}\omega[\mathfrak{p}(\gamma),\infty_+])=\varphi(z).
\end{equation}
Recalling equation (\ref{3.9}), we have
\begin{equation}\label{4.15}
-\mathscr{A}(\mathfrak{p})=\vec{\eta}_+-\vec{\Omega}_1z+O(z^2),\quad\vec{\eta}_+
=\int_{\infty_+}^{\mathfrak{p}_0}\vec{\omega}.
\end{equation}
Then, from   (\ref{4.12a}) we get
\begin{subequations}\label{4.16}
\begin{align}
&\frac{z\tilde{h}_+^{(1)}}{h_+^{(1)}}=\frac{d_{m+1}}{d_m}\frac{\theta[-\vec{\Omega}_1 z+O(z^2)+\vec{\eta}_++\vec{\psi}(m+1)+\vec{K}]}
{\theta[-\vec{\Omega}_1 z+O(z^2)+\vec{\eta}_++\vec{\psi}(m)+\vec{K}]}\cdot z e^{\int_{\mathfrak{p}_0}^{\mathfrak{p}}\omega[\mathfrak{p}(\gamma),\infty_+]},
\label{4.16a}\\
&\partial_z\text{log}\frac{z\tilde{h}_+^{(1)}}{h_+^{(1)}}=\partial_z\text{log}\frac{\theta[-\vec{\Omega}_1
z+O(z^2)+\vec{\eta}_++\vec{\psi}(m+1)+\vec{K}]}
{\theta[-\vec{\Omega}_1
z+O(z^2)+\vec{\eta}_++\vec{\psi}(m)+\vec{K}]}+\varphi(z).\label{4.16b}
\end{align}
\end{subequations}
On the other hand, since $h_{\pm}=(h_{\pm}^{(1)},h_{\pm}^{(2)})^T$
satisfies equation (\ref{4.7b}), we have
\begin{equation}\label{4.17}
\frac{z\tilde{h}_+^{(1)}}{h_+^{(1)}}=1+(u\tilde{v}-\gamma)z+\frac{uh_+^{(2)}}{h_+^{(1)}}z
=1+(u\tilde{v}-\gamma)z+O(z^2),
\end{equation}
where the following estimation is used,
\begin{equation}
\frac{uh_+^{(2)}}{h_+^{(1)}}=\frac{L^{11}(\lambda)-H(\lambda)}{-L^{12}(\lambda)}
=<q,q>\lambda^{-1}[1+O(\lambda^{-1})]=O(z).\notag
\end{equation}
Now, taking derivative of the equation (\ref{4.17}) with respect to $z$ and comparing it
with  (\ref{4.16b}) at  $z=0$, with the help of the relation
$\vec{\psi}+\vec{\eta}_+\equiv\vec{\phi}+\vec{\eta}_-$ in equation
(\ref{3.16a}), we obtain the following.

\begin{Proposition}\label{P-4.2}
Let $(a,b)=(u,\tilde{v})$ be finite genus potential functions of equation (\ref{1.9}), defined by equation (\ref{4.4}).
Then we have
\begin{equation}\label{4.18}
u\tilde{v}=-\partial_z\vert_{z=0}\mathrm{log}\frac{\theta[\vec{\Omega}_1
z+\vec{\phi}(m+1)+\vec{\eta}_-+\vec{K}]} {\theta[\vec{\Omega}_1
z+\vec{\phi}(m)+\vec{\eta}_-+\vec{K}]}+[\gamma+\varphi(0)].
\end{equation}
\end{Proposition}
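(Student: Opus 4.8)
The plan is to combine the two local expansions near $\infty_+$ already set up in the excerpt, namely the theta-function expression (\ref{4.16b}) coming from the reconstruction formula (\ref{4.12a}) and the spectral-problem expansion (\ref{4.17}) coming from (\ref{4.7b}), and to read off the coefficient of $z$ at $z=0$. First I would differentiate (\ref{4.17}) with respect to $z$, obtaining $\partial_z \log(z\tilde h_+^{(1)}/h_+^{(1)})\big|_{z=0} = u\tilde v - \gamma$, since the logarithmic derivative of $1+(u\tilde v-\gamma)z + O(z^2)$ at $z=0$ is exactly the linear coefficient. On the other side of (\ref{4.16b}) one has $\partial_z \log\theta[-\vec\Omega_1 z + O(z^2) + \vec\eta_+ + \vec\psi(m+1)+\vec K] - \partial_z\log\theta[\,\cdots\vec\psi(m)\,] + \varphi(z)$, evaluated at $z=0$; the $O(z^2)$ terms inside the theta arguments do not contribute to the first derivative at $z=0$, so effectively the argument moves linearly with velocity $-\vec\Omega_1$.

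Next I would equate the two expressions and solve for $u\tilde v$:
\begin{equation*}
u\tilde v = \partial_z\big|_{z=0}\log\frac{\theta[-\vec\Omega_1 z + \vec\eta_+ + \vec\psi(m+1)+\vec K]}{\theta[-\vec\Omega_1 z + \vec\eta_+ + \vec\psi(m)+\vec K]} + \gamma + \varphi(0).
\end{equation*}
The remaining task is purely a change of variables in the theta arguments. Using the Abel-Jacobi relation (\ref{3.16a}), $\vec\psi + \vec\eta_+ \equiv \vec\phi + \vec\eta_-$ modulo the period lattice $\mathscr T$, I would replace $\vec\eta_+ + \vec\psi(m) \mapsto \vec\eta_- + \vec\phi(m)$ and likewise for $m+1$; since $\theta$ is quasi-periodic, shifting both numerator and denominator arguments by the same lattice element multiplies $\theta$ by the same exponential factor in numerator and denominator, which cancels in the ratio and, more importantly, contributes a constant to the logarithm that is killed by $\partial_z$. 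Finally, to match the sign convention in (\ref{4.18}) I would flip $z \mapsto -z$ inside the theta arguments: this negates the derivative, turning $\partial_z\log(\cdots -\vec\Omega_1 z\cdots)$ into $-\partial_z\log(\cdots +\vec\Omega_1 z\cdots)$, and absorbing the overall sign gives precisely the stated formula with $-\partial_z|_{z=0}$ in front and $[\gamma + \varphi(0)]$ as the additive constant.

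The only genuinely delicate points are bookkeeping ones rather than conceptual obstacles. The main thing to be careful about is that the $O(z^2)$ perturbations buried inside the theta arguments in (\ref{4.16a})–(\ref{4.16b}) really are harmless at the order we need: since we evaluate a single $z$-derivative at $z=0$, only the value and first derivative of the theta argument matter, and the $O(z^2)$ piece affects neither. A second point requiring attention is that (\ref{4.16b}) is stated for $h_+^{(1)}$ (the Baker function on the upper sheet near $\infty_+$) and that the normalization constants $d_m^{(1)}$ enter only through $d_{m+1}/d_m$, a factor independent of $z$, so it too drops out under $\partial_z$. I expect the hardest part of the write-up to be the sign and lattice-shift accounting in passing from $(\vec\eta_+, \vec\psi)$ to $(\vec\eta_-,\vec\phi)$ and in the $z \mapsto -z$ reflection; everything else is direct comparison of Taylor coefficients and a one-line appeal to quasi-periodicity of $\theta$.
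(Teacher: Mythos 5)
Your proposal is correct and follows essentially the same route as the paper: the paper's own argument is precisely to differentiate (\ref{4.17}) in $z$, compare with (\ref{4.16b}) at $z=0$ (where the $O(z^2)$ terms and the $z$-independent constants $d_{m+1}/d_m$ drop out), and then pass from $(\vec\eta_+,\vec\psi)$ to $(\vec\eta_-,\vec\phi)$ via (\ref{3.16a}) with the sign handled by the $-\vec\Omega_1 z$ versus $+\vec\Omega_1 z$ reflection. Your bookkeeping remarks (harmlessness of the $O(z^2)$ perturbation, cancellation of the lattice-shift factors under $\partial_z$ when the same period vector is used in numerator and denominator) are exactly the points the paper leaves implicit.
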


\section{ Finite genus solutions to the lpKP}\label{sec-5}

Let $\gamma=\gamma_1,\,\gamma_2,\,\gamma_3$ be distinct and non-zero.
We can apply the same theory we developed in Section \ref{sec-4} to the three corresponding cases, respectively.
The resulting integrable maps
$S_{\gamma_1},\,S_{\gamma_2},\,S_{\gamma_3}$ commute in pairs since
they share the same integrals $E_1,\cdots,E_N$ (see Appendix in \cite{8-CaoZ-JPA-2012}).
By iteration we have discrete flows
$S_{\gamma_1}^{m_1},\,S_{\gamma_2}^{m_2},\,S_{\gamma_3}^{m_3}$,
and hence well-defined functions from any starting point
\begin{subequations}\label{5.1}
\begin{align}
&(p(m_1,m_2,m_3),q(m_1,m_2,m_3))=S_{\gamma_1}^{m_1}S_{\gamma_2}^{m_2}S_{\gamma_3}^{m_3}(p_0,q_0),
\label{5.1a}\\
&(u(m_1,m_2,m_3),v(m_1,m_2,m_3))=(-<p,p>,<q,q>)\vert_{(m_1,m_2,m_3)}.\label{5.1b}
\end{align}
\end{subequations}
Define $a=u$, and let $b$ take $\tilde{v}=T_1v,\,\bar{v}=T_2v,\,\hat{v}=T_3v$, respectively.
By the commutativity of the flows, one can present the functions given by equation (\ref{5.1a}) in three ways, respectively  as
\begin{equation}\label{5.2}
(p(m_k),q(m_k))=S_{\gamma_k}^{m_k}(p_0^{(k)},q_0^{(k)}),\quad(k=1,2,3).
\end{equation}
Thus, from equation (\ref{4.1}) in the three special cases, the $j$-th
component satisfies three equations simultaneously with $\lambda=\alpha_j$,
\begin{equation}\label{5.3}
T_k{p_j\choose
q_j}=(\alpha_j-\gamma_k)^{-1/2}D^{(\gamma_k)}(\alpha_j;u,T_kv){p_j\choose
q_j},\quad(k=1,2,3).
\end{equation}
Introducing
\begin{equation}\label{5.4}
\chi=(\alpha_j-\gamma_1)^{m_1/2}(\alpha_j-\gamma_2)^{m_2/2}(\alpha_j-\gamma_3)^{m_3/2}{p_j\choose
q_j},
\end{equation}
we then have
\begin{equation}\label{5.5}
T_k\chi=D^{(\gamma_k)}(\alpha_j;u,T_kv)\chi,\quad(k=1,2,3).
\end{equation}
In other words, the overdetermined system of equations (\ref{2.2}a-c) has
a compatible solution $\chi$ for the parameter $\lambda=\alpha_j$.
Now, for the lpKP equation (\ref{1.11}), recalling Proposition \ref{P-2.2}, we arrive at the following.
\begin{Proposition}\label{P-5.1}
The lpKP equation (\ref{1.11}),
$\Xi^{(0,3)}=0$, has a special solution
\begin{align}
W(m_1,m_2,m_3)=&2\partial_z\vert_{z=0}\mathrm{log}
\frac{\theta[z\vec{\Omega}_1+\vec{\phi}(m_1,m_2,m_3)+\vec{\eta}_-+\vec{K}]}
{\theta[z\vec{\Omega}_1+\vec{\phi}(0,0,0)+\vec{\eta}_-+\vec{K}]}\nonumber\\
&-2\sum_{s=1}^3m_s[\gamma_s+\varphi_s(0)]+W(0,0,0),\label{5.6}
\end{align}
where
\begin{equation}\label{5.7}
\vec{\phi}(m_1,m_2,m_3)=\sum_{s=1}^3m_s\vec{\Omega}_{\gamma_s}+\vec{\phi}(0,0,0),
\end{equation}
and $\varphi_s(z)$ is defined by equation (\ref{4.13}) in the case of
$\gamma=\gamma_s$.
\end{Proposition}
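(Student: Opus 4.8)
The plan is to assemble Proposition \ref{P-5.1} from three ingredients already in hand: the compatibility statement just established in equations \eqref{5.2}--\eqref{5.5}, the structural Proposition \ref{P-2.2}, and the theta-function formula of Proposition \ref{P-4.2}. First I would verify that the hypotheses of Proposition \ref{P-2.2} are met. Hypothesis $(i)$ is precisely equation \eqref{5.5}: taking $\lambda=\alpha_j$ for any fixed $j\in\{1,\dots,N\}$, the vector $\chi$ built in \eqref{5.4} solves \eqref{2.2a,b,c} simultaneously. Hypothesis $(ii)$ requires that the auxiliary system \eqref{2.3a,b,c} admit a solution $W$; I would produce $W$ explicitly as \eqref{5.6} and then check by direct computation that each of the three difference relations $T_kW-W=-2u\,T_kv$ holds, which is exactly what Proposition \ref{P-4.2} delivers in each of the three cases $\gamma=\gamma_1,\gamma_2,\gamma_3$. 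Once both hypotheses are confirmed, Proposition \ref{P-2.2} gives $\Xi^{(0,3)}=0$ for this $W$, which is the assertion.

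The heart of the matter is therefore the consistency of the definition \eqref{5.6}. Proposition \ref{P-4.2} gives, for a single shift direction $T_k$, the formula $u\,T_kv=-\partial_z\vert_{z=0}\log\bigl(\theta[z\vec\Omega_1+\vec\phi(\cdots+1_k)+\vec\eta_-+\vec K]/\theta[z\vec\Omega_1+\vec\phi(\cdots)+\vec\eta_-+\vec K]\bigr)+[\gamma_k+\varphi_k(0)]$, where $\vec\phi$ advances by $\vec\Omega_{\gamma_k}$ under $T_k$ by \eqref{4.10b}; this is the origin of \eqref{5.7}, which records that the combined flow $S_{\gamma_1}^{m_1}S_{\gamma_2}^{m_2}S_{\gamma_3}^{m_3}$ moves $\vec\phi$ linearly, $\vec\phi(m_1,m_2,m_3)=\sum_s m_s\vec\Omega_{\gamma_s}+\vec\phi(0,0,0)$, a consequence of the pairwise commutativity of the maps $S_{\gamma_k}$ noted at the start of Section \ref{sec-5}. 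I would then define $W$ on $\mathbb Z^3$ by \eqref{5.6} and check that applying $T_k$ and subtracting yields $-2u\,T_kv$: the theta-quotient telescopes so that $T_kW-W$ equals $2\partial_z\vert_{z=0}\log\bigl(\theta[z\vec\Omega_1+\vec\phi+\vec\Omega_{\gamma_k}+\vec\eta_-+\vec K]/\theta[z\vec\Omega_1+\vec\phi+\vec\eta_-+\vec K]\bigr)-2[\gamma_k+\varphi_k(0)]$, which by Proposition \ref{P-4.2} is exactly $-2u\,T_kv$. The $W(0,0,0)$ term fixes the additive constant, and the linear drift terms $-2\sum_s m_s[\gamma_s+\varphi_s(0)]$ are the accumulated contributions of the $[\gamma_k+\varphi_k(0)]$ pieces.

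The step I expect to be the main obstacle is checking that \eqref{5.6} is genuinely well-defined on $\mathbb Z^3$, i.e. path-independent: the three prescriptions $T_kW-W=-2u\,T_kv$ form an overdetermined system, and one must confirm the discrete compatibility $(T_kT_l-T_lT_k)W=0$, equivalently $T_l(u\,T_kv)-u\,T_kv=T_k(u\,T_lv)-u\,T_lv$. This is where the lNLS relations \eqref{2.4} and the identity $(u\widetilde v)_x$-type cross-relations from Section \ref{sec-2} re-enter; concretely it follows from the fact that, for the nonlinearized potentials of \eqref{5.1b}, the pair $(u,v)$ solves the lNLS system in each coordinate plane (the same computation underlying Lemma \ref{L-2.1}), so the closed-form theta expression \eqref{5.6}, whose increments are dictated by the commuting linear flow \eqref{5.7} on $J(\mathcal R)$, automatically satisfies the compatibility. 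I would phrase this as a short remark rather than a long computation, since the commutativity of $S_{\gamma_1},S_{\gamma_2},S_{\gamma_3}$ and the additivity \eqref{5.7} already encode it. Everything else is substitution: expand the theta quotient near $z=0$, match against Proposition \ref{P-4.2} term by term in each direction, and invoke Proposition \ref{P-2.2} to conclude $\Xi^{(0,3)}=0$.
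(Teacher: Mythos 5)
Your proposal is correct and follows essentially the same route as the paper: define $W$ by \eqref{5.6}, use \eqref{5.7} and Proposition \ref{P-4.2} to verify $T_kW-W=-2u\,T_kv$ for $k=1,2,3$, and then invoke Proposition \ref{P-2.2} together with the compatibility already secured in \eqref{5.5}. The only difference is your worry about path-independence, which is superfluous here: since $W$ is given by the explicit closed formula \eqref{5.6} on $\mathbb{Z}^3$ (with $u$, $v$, $\vec{\phi}$ already well-defined by the commutativity of the maps $S_{\gamma_k}$), there is nothing to check beyond the three difference relations themselves, which is exactly what the paper does.
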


\begin{proof}
We have $(k=1,2,3)$
\[
T_kW-W=2\partial_z\vert_{z=0}\text{log}\frac{\theta[z\vec{\Omega}_1+T_k\vec{\phi}(m_1,m_2,m_3)+\vec{\eta}_-+\vec{K}]}
{\theta[z\vec{\Omega}_1+\vec{\phi}(m_1,m_2,m_3)+\vec{\eta}_-+\vec{K}]}-2[\gamma_k+\varphi_k(0)].
\]
It is equal to $-2u(T_kv)$ by equation (\ref{4.18}). Thus $W$ solves
(\ref{2.3}a-c) simultaneously. According to Proposition \ref{P-2.2}, $W$ solves
equation (\ref{1.11}).
\end{proof}

\section{ Solutions of other equations}\label{sec-6}
In solving  pKP equation  $\Xi^{(j,k)}=0$ that contains at least one continuous argument  $x$,
we will derive an explicit analytic expression for  $uv$,
which is similar to  (\ref{4.18}) and also  meets the auxiliary equation (\ref{2.3d}).
This can be done on the Liouville integrable platform as well, like in the discrete case.
We list the main steps as follows.

Consider $(p(x),q(x))=g_{H_1}^x(p_0,q_0)$. Hence $(u(x),v(x))=(-<p,p>,<q,q>)$ provide the finite genus potential functions.
For the ZS-AKNS equation (\ref{1.4}) with these potential functions, the solution space
$\mathcal{E}_{\lambda}$ is invariant under the action of $L(\lambda)$ due to the commutativity relation (\ref{3.3}).
The linear operator $L(\lambda)$  has eigenvalues $\pm H(\lambda)$,
with associated eigenvectors $h_{\pm}$ in $\mathcal{E}_{\lambda}$, satisfying
\begin{subequations}\label{6.1}
\begin{align}
&L(\lambda)h_{\pm}(x,\lambda)=\pm H(\lambda)h_{\pm}(x,\lambda),\label{6.1a}\\
&\partial_xh_{\pm}(x,\lambda)=U_1(\lambda;u(x),v(x))h_{\pm}(x,\lambda).\label{6.1b}
\end{align}\end{subequations}
Let $M(x,\lambda)$  be basic solution matrix of equation (\ref{6.1b}). The
eigenvectors are uniquely determined under the normalized condition
$h_{\pm}^{(2)}(0,\lambda)=1$ and can be expressed as
\begin{subequations}\label{6.2}
\begin{align}
&h_{\pm}(x,\lambda)={h_{\pm}^{(1)}(x,\lambda)\choose
h_{\pm}^{(2)}(x,\lambda)}=M(x,\lambda){c^{\pm}_{\lambda}\choose 1},\label{6.2a}\\
&c^{\pm}_{\lambda}=\frac{L^{11}(0,\lambda)\pm
H(\lambda)}{L^{21}(0,\lambda)}.\label{6.2b}
\end{align}\end{subequations}
Two meromorphic functions $\mathfrak{h}^{(\kappa)}(x,\mathfrak{p})$,
$\kappa=1,2$, are defined in $\mathcal{R}-\{\infty_+,\,\infty_-\}$
by
\begin{equation*}
\mathfrak{h}^{(\kappa)}(x,\mathfrak{p}(\lambda))=h_+^{(\kappa)}(x,\lambda),\quad
\mathfrak{h}^{(\kappa)}(x,\tau\mathfrak{p}(\lambda))=h_-^{(\kappa)}(x,\lambda).
\end{equation*}
A formula of Dubrovin-Novikov's type is derived from the
commutativity relation (\ref{3.3}). It is used to calculate the divisor of
$\mathfrak{h}^{(2)}(x,\mathfrak{p})$, which is equal to
$\sum_{j=1}^g[\mathfrak{p}(\nu_j(x))-\mathfrak{p}(\nu_j(0))]$. By
equations (\ref{3.11b}) and (\ref{3.15b}), we have
\begin{equation}\label{6.3}
\vec{\phi}(x)=\mathscr{A}\Bigl(\sum_{j=1}^g\mathfrak{p}(\nu_j(x))\Bigr)\equiv
x\vec{\Omega}_1+\vec{\phi}(0),\quad(\text{mod}\,\mathscr{T}).
\end{equation}
On the two-sheeted Riemann surface  $\mathcal {R}$, an Abel
differential, $\omega^{(1)}[\infty_-,\infty_+]$, of the third kind
is constructed, having only poles at $\infty_-,\,\infty_+$ with
\begin{equation}\label{6.4}
\omega^{(1)}[\infty_-,\infty_+]=\begin{cases}
[-z^{-2}-a^{(1)}(z)]\text{d}z,&\text{near $\infty_+$},\\
[+z^{-2}+a^{(1)}(z)]\text{d}z,&\text{near $\infty_-$},
\end{cases}
\end{equation}
where $a^{(1)}(z)$ is holomorphic near $z\sim 0$.
Without loss of generality, it can be arranged to satisfy the condition
\begin{equation}\label{6.5}
\int_{a_j}\omega^{(1)}=0,\quad\int_{b_j}\omega^{(1)}=-4\pi
\text{i}\Omega_1^j,\quad(1\leq j\leq g),
\end{equation}
where $\vec{\Omega}_1=(\Omega_1^1,\cdots,\Omega_1^g)^T$. Actually,
by adding a linear combination of holomorphic differentials
$\omega_1,\cdots,\omega_g$  to $\omega^{(1)}$, we can make the
former formula in equation (\ref{6.5}) valid. The latter is a corollary of
the former, which can be verified by using the canonical
representation of the Riemann surface  $\mathcal{R}$ \cite{10,20}. The
form of local expressions (\ref{6.4}) is invariant with adjusted
$a^{(1)}(z)$. We adopt the same symbol, for short. Through a usual
analysis we reconstruct \cite{4,20}
\begin{equation}\label{6.6}
\mathfrak{h}^{(2)}(x,\mathfrak{p})=c^{(2)}(x)\frac{\theta[-\mathscr{A}(\mathfrak{p})+\vec{\phi}(x)+\vec{K}]}
{\theta[-\mathscr{A}(\mathfrak{p})+\vec{\phi}(0)+\vec{K}]}
\cdot\text{exp}\Big(\frac{x}{2}\int_{\mathfrak{p}_0}^{\mathfrak{p}}\omega^{(1)}[\infty_-,\infty_+]\Big),
\end{equation}
where $c^{(2)}$ is independent of  $\mathfrak{p}\in\mathcal{R}$.
Equations (\ref{6.5}) are used to cancel the extra factors caused by
the uncertain linear combination of the contours
$a_1,\cdots,a_g,b_1,\cdots,b_g$ in the integration route  from the
point $\mathfrak{p}_0$ to $\mathfrak{p}$, both in
$\mathscr{A}(\mathfrak{p})$ and in the integral of $\omega^{(1)}$.

By equation (\ref{3.9}), near $\infty_-$ we have ($z=\lambda^{-1}\sim 0$)
\begin{equation*}
-\mathscr{A}(\mathfrak{p})=\vec{\eta}_-+\vec{\Omega}_1z+O(z^2),\quad\vec{\eta}_-=\int_{\infty_-}^{\mathfrak{p}_0}
\vec{\omega}.
\end{equation*}
Exerting action  $\partial_z\partial_x\text{log}$ on equation (\ref{6.6}),
we obtain $\partial_z\partial_x\text{log}h_-^{(2)}$, which is equal to
\begin{equation}\label{6.7}
\partial_z\partial_x\text{log}\theta[\vec{\Omega}_1z+O(z^2)+\vec{\phi}(x)+\vec{\eta}_-
+\vec{K}]+\frac{1}{2}[z^{-2}+a^{(1)}(z)].
\end{equation}
On the other hand, by equation (\ref{6.1a}) we estimate
\begin{equation}\label{6.8}
v\frac{h_-^{(1)}}{h_-^{(2)}}=v\frac{L^{11}(\lambda)-H(\lambda)}{L^{21}(\lambda)}=-uv\lambda^{-1}+O(\lambda^{-2}),
\end{equation}
where the following estimations are employed,
\begin{align*}
&L^{11}(\lambda)=\frac{1}{2}+<p,q>z+<Ap,q>z^2+O(z^3), \\
&L^{21}(\lambda)=<q,q>z+O(z^2), \\
&H(\lambda)=\frac{1}{2}-2H_0z-2H_1z^2+O(z^3).
\end{align*}
From equation (\ref{6.1b}) and the estimation (\ref{6.8}), we have
\begin{subequations}\label{6.9}
\begin{align}
&\partial_x\text{log}h_-^{(2)}=-\frac{\lambda}{2}+v\frac{h_-^{(1)}}{h_-^{(2)}}
=-\frac{1}{2}z^{-1}-uvz+O(z^2),\label{6.9a}\\
&\partial_z\partial_x\text{log}h_-^{(2)}=z^{-2}/2-uv+O(z).\label{6.9b}
\end{align}
\end{subequations}
Then, equating equation (\ref{6.7}) with (\ref{6.9b})  to cancel the singular term $z^{-2}/2$,
we obtain the following.
\begin{Proposition}\label{P-6.1}
Let $(u,v)$  be finite genus potential functions for equation (\ref{1.4}). Then
\begin{equation}\label{6.10}
-2uv=2\partial_z\vert_{z=0}\partial_x\mathrm{log}\,\theta[\vec{\Omega}_1z
+\vec{\phi}(x)+\vec{\eta}_-+\vec{K}]+a^{(1)}(0).
\end{equation}
\end{Proposition}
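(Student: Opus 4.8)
The plan is to mirror the computation that led to Proposition \ref{P-4.2} in the discrete case, but now along the continuous $x$-flow $g_{H_1}^x$, and to read off the formula for $uv$ by matching the two expressions for $\partial_z\partial_x\log h_-^{(2)}$ near $\infty_-$. Concretely: first, fix $(p(x),q(x))=g_{H_1}^x(p_0,q_0)$ and set $(u,v)=(-\langle p,p\rangle,\langle q,q\rangle)$, so that all the spectral data ($L(\lambda)$, $H(\lambda)$, the curve $\mathcal R$, the divisor $\sum_j\mathfrak p(\nu_j(x))$) are well defined and the Abel--Jacobi variable obeys the linear flow \eqref{6.3}. Second, reconstruct the Baker function $\mathfrak h^{(2)}(x,\mathfrak p)$ from its divisor and its essential singularities by the standard theta-function recipe \eqref{6.6}, using the third-kind differential $\omega^{(1)}[\infty_-,\infty_+]$ normalized as in \eqref{6.5} so that the spurious periods cancel. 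Third, expand \eqref{6.6} locally near $\infty_-$ in the coordinate $z=\lambda^{-1}$, apply $\partial_z\partial_x\log$, and obtain the expression \eqref{6.7} involving the theta quotient plus the principal part $\tfrac12[z^{-2}+a^{(1)}(z)]$.

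The second, independent computation of the same quantity comes from the differential equation \eqref{6.1b} itself. From the first row of \eqref{6.1b} one gets $\partial_x\log h_-^{(2)}=-\lambda/2+v\,h_-^{(1)}/h_-^{(2)}$, and the ratio $h_-^{(1)}/h_-^{(2)}$ is pinned down by the eigenvector relation \eqref{6.1a}, namely $h_-^{(1)}/h_-^{(2)}=(L^{11}-H)/L^{21}$. Plugging in the local expansions of $L^{11}$, $L^{21}$, $H$ listed before \eqref{6.9} gives the expansion \eqref{6.9a}, and differentiating in $z$ yields \eqref{6.9b}: $\partial_z\partial_x\log h_-^{(2)}=z^{-2}/2-uv+O(z)$. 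Equating this with \eqref{6.7} cancels the singular term $z^{-2}/2$ on both sides; evaluating the regular part at $z=0$ and using $\vec\phi(x)\equiv x\vec\Omega_1+\vec\phi(0)$ converts $\partial_x$ acting on the theta-argument into the $\vec\Omega_1$-directional derivative already written as $\partial_z|_{z=0}$, and one arrives at \eqref{6.10}.

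The step I expect to be the main obstacle is the careful bookkeeping in the reconstruction \eqref{6.6}: one must verify that the divisor of $\mathfrak h^{(2)}(x,\mathfrak p)$ is exactly $\sum_{j=1}^g[\mathfrak p(\nu_j(x))-\mathfrak p(\nu_j(0))]$ (this is the Dubrovin--Novikov-type argument from the commutativity $\partial_xL=[U_1,L]$), and then confirm that the chosen normalization \eqref{6.5} of $\omega^{(1)}$ really kills the multivaluedness coming from an ambiguous integration path in both $\mathscr A(\mathfrak p)$ and $\int\omega^{(1)}$ simultaneously — the factor $x/2$ in the exponential and the coefficient $-4\pi\mathrm i\,\Omega_1^j$ in \eqref{6.5} have to match precisely, and this is the only place where a sign or a factor of two can go wrong. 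Everything after that, including the two local expansions and the final matching, is routine power-series algebra. Since the entire apparatus (Lax equation \eqref{3.3}, the Abel--Jacobi linearization of Proposition \ref{P-3.4}, and the structure of $L(\lambda)$) has already been established, I would simply cite those results and present the reconstruction and the two expansions, then equate and conclude.
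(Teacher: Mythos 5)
Your proposal is correct and follows essentially the same route as the paper: reconstruct $\mathfrak h^{(2)}$ via \eqref{6.6} with the normalized dipole \eqref{6.4}--\eqref{6.5}, expand $\partial_z\partial_x\log h_-^{(2)}$ near $\infty_-$ once through the theta formula (giving \eqref{6.7}) and once through the linear problem (giving \eqref{6.9b}), then equate and cancel the $z^{-2}/2$ singularity. Only a trivial slip: the relation $\partial_x\log h_-^{(2)}=-\lambda/2+v\,h_-^{(1)}/h_-^{(2)}$ comes from the second row of \eqref{6.1b}, not the first.
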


Next, we can recover $W$ for the pKP equations with continuous arguments.
In order to solve  $\Xi^{(1,2)}=0$, we consider the integrable maps
$S_{\gamma_1,}\,S_{\gamma_2}$ and $g_{H_1}^x$,
which commute in pairs since they share the same integrals $\{E_j\}$ (cf.\cite{8-CaoZ-JPA-2012}).
Well-defined functions are constructed as
\begin{subequations}\label{6.11}
\begin{align}
&(p(x,m_1,m_2),q(x,m_1,m_2))=g_{H_1}^xS_{\gamma_1}^{m_1}S_{\gamma_2}^{m_2}(p_0,q_0),\label{6.11a}\\
&(u(x,m_1,m_2),v(x,m_1,m_2))=(-<p,p>,<q,q>)\vert_{(x,m_1,m_2)}.\label{6.11b}
\end{align}
\end{subequations}
By the commutativity of the flows, the functions in equation (\ref{6.11a})
can be presented in three ways, respectively, as
\begin{subequations}\label{6.12}
\begin{align}
&(p(x),q(x))=g_{H_1}^x(p_0^{\prime},q_0^{\prime}),\label{6.12a}\\
&(p(m_k),q(m_k))=S_{\gamma_k}^{m_k}(p_0^{(k)},q_0^{(k)}),\quad(k=1,2).\label{6.12b}
\end{align}
\end{subequations}
Thus the $j$-th component satisfies three equations simultaneously
with $\lambda=\alpha_j$,
\begin{subequations}\label{6.13}
\begin{align}
&\partial_x{p_j\choose q_j}=U_1(\alpha_j;u,v){p_j\choose q_j,},\label{6.13a}\\
&T_k{p_j\choose
q_j}=(\alpha_j-\gamma_k)^{-1/2}D^{(\gamma_k)}(\alpha_j;u,T_kv){p_j\choose
q_j},\quad(k=1,2).\label{6.13b}
\end{align}
\end{subequations}
Introducing
$\chi=(\alpha_j-\gamma_1)^{m_1/2}(\alpha_j-\gamma_2)^{m_2/2}(p_j\,q_j)^T$,
we have
\begin{subequations}\label{6.14}
\begin{align}
&\partial_x\chi=U_1(\alpha_j;u,v)\chi,\label{6.14a}\\
&T_k\chi=D^{(\gamma_k)}(\alpha_j;u,T_kv)\chi,\quad(k=1,2).\label{6.14b}
\end{align}
\end{subequations}
Thus equations (\ref{1.4}) and (\ref{2.2}a,b) have a compatible solution  $\chi$
for the parameter $\lambda=\alpha_j$.

\begin{Proposition}\label{P-6.2}
The semi-discrete pKP equation (\ref{1.12}), i.e. $\Xi^{(1,2)}=0$, has a solution
\begin{align}
W(x,m_1,m_2)=&2\partial_z\vert_{z=0}\mathrm{log}
\frac{\theta[z\vec{\Omega}_1+\vec{\phi}(x,m_1,m_2)+\vec{\eta}_-+\vec{K}]}
{\theta[z\vec{\Omega}_1+\vec{\phi}(0,0,0)+\vec{\eta}_-+\vec{K}]}\nonumber\\
&-2\sum_{s=1}^2m_s[\gamma_s+\varphi_s(0)]+a^{(1)}(0)x+W(0,0,0),\label{6.15}
\end{align}
where
$\vec{\phi}(x,m_1,m_2)=x\vec{\Omega}_1+\sum_{s=1}^2m_s\vec{\Omega}_{\gamma_s}+\vec{\phi}(0,0,0)$,
 $\varphi_s(z)$ defined by equation (\ref{4.13}) with
$\gamma=\gamma_s$, and $a^{(1)}(z)$ given by equation (\ref{6.4}).
\end{Proposition}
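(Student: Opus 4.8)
The plan is to mirror the structure of the proof of Proposition \ref{P-5.1}, but now with one continuous argument $x$ and two discrete arguments $m_1,m_2$. First I would set up the three commuting flows $g_{H_1}^x$, $S_{\gamma_1}$, $S_{\gamma_2}$ on the common phase space $\mathbb{R}^{2N}$, which commute because they share the confocal polynomials $E_1,\dots,E_N$ as integrals (Proposition \ref{P-3.2}, Proposition \ref{P-4.1}, and the commutativity established in \cite{8-CaoZ-JPA-2012}). This gives the well-defined functions $(p,q)(x,m_1,m_2)$ and hence $(u,v)(x,m_1,m_2)$ as in \eqref{6.11}. Presenting these functions in the three ways \eqref{6.12} and combining with the spectral-problem identities \eqref{6.13}, \eqref{6.14}, I obtain that equations \eqref{1.4} and \eqref{2.2}a,b admit a common solution $\chi$ at $\lambda=\alpha_j$; this is exactly hypothesis $(i)$ of Proposition \ref{P-2.3}.

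Next I would verify hypothesis $(ii)$ of Proposition \ref{P-2.3}, namely that the proposed $W$ in \eqref{6.15} solves the auxiliary system \eqref{2.3a}, \eqref{2.3b}, and \eqref{2.3d} simultaneously. For the two shift relations \eqref{2.3a} and \eqref{2.3b}: applying $T_k-\mathrm{id}$ ($k=1,2$) to \eqref{6.15}, the $x$-linear and the $W(0,0,0)$ terms drop, the $\sum m_s$-term contributes $-2[\gamma_k+\varphi_k(0)]$, and the theta-quotient difference telescopes to
\[
T_kW-W = 2\,\partial_z\big|_{z=0}\log\frac{\theta[z\vec\Omega_1+T_k\vec\phi+\vec\eta_-+\vec K]}{\theta[z\vec\Omega_1+\vec\phi+\vec\eta_-+\vec K]}-2[\gamma_k+\varphi_k(0)],
\]
which by Proposition \ref{P-4.2}, equation \eqref{4.18}, equals $-2u\,T_kv$; here one uses that along the $S_{\gamma_k}$-flow $\vec\phi$ advances by $\vec\Omega_{\gamma_k}$ (equation \eqref{4.10b}) and that the constant $\vec K$ and the infinities-related vectors are flow-independent. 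For the continuous relation \eqref{2.3d}: differentiating \eqref{6.15} in $x$, the discrete terms are $x$-independent, the linear term contributes $a^{(1)}(0)$, and the theta-quotient term contributes $2\,\partial_z|_{z=0}\partial_x\log\theta[z\vec\Omega_1+\vec\phi(x,m_1,m_2)+\vec\eta_-+\vec K]$, so $\partial_xW$ equals the right-hand side of \eqref{6.10}, i.e. $-2uv$; here one uses that along the $g_{H_1}^x$-flow $\vec\phi$ advances by $x\vec\Omega_1$ (equation \eqref{6.3}, which rests on \eqref{3.11b} and \eqref{3.15b}). With both hypotheses of Proposition \ref{P-2.3} in hand, $W$ solves $\Xi^{(1,2)}=0$.

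The main obstacle I anticipate is the compatibility of the three representation formulas \eqref{6.6} and \eqref{4.12} for the Baker function when both a continuous flow and discrete flows act at once: I must be sure that the \emph{same} normalization constant $\vec K$, the \emph{same} base point $\mathfrak p_0$, and compatible choices of the meromorphic differentials $\omega[\mathfrak p(\gamma_s),\infty_+]$ and $\omega^{(1)}[\infty_-,\infty_+]$ can be used so that the $x$-part and the $m_s$-parts of $\log\mathfrak h^{(2)}$ simply add. Concretely, the dipole/second-kind differentials governing the $x$- and $m_s$-evolutions are different objects, and one needs that their exponential factors combine into a single function on $\mathcal R$ whose $a$-periods vanish and whose $b$-periods produce only lattice shifts in $\mathscr T$; this is what makes $\vec\phi(x,m_1,m_2)=x\vec\Omega_1+\sum_s m_s\vec\Omega_{\gamma_s}+\vec\phi_0$ hold modulo $\mathscr T$ and, crucially, what lets the $z$-derivative at $z=0$ of the combined object split into the clean sum of the three contributions appearing in \eqref{6.15}. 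Once this linear-superposition-on-the-Jacobian principle is justified (it follows from the linearity of the Abel map and the fact that all flows share the spectral curve $\mathcal R$ and the integrals $E_k$), the remainder is the routine bookkeeping sketched above, and the proof concludes exactly as the proof of Proposition \ref{P-5.1} does, by invoking Proposition \ref{P-2.3}.
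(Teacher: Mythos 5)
Your proposal is correct and follows essentially the same route as the paper: establish hypothesis $(i)$ of Proposition \ref{P-2.3} via the commuting flows $g_{H_1}^x, S_{\gamma_1}, S_{\gamma_2}$ and the compatible solution $\chi$ at $\lambda=\alpha_j$, then verify hypothesis $(ii)$ by computing $T_kW-W=-2u\,T_kv$ from \eqref{4.18} and $\partial_xW=-2uv$ from \eqref{6.10}, and conclude by Proposition \ref{P-2.3}. The normalization/compatibility concern you raise is handled implicitly in the paper by the fact that all flows share the same curve $\mathcal R$, Lax matrix and Abel--Jacobi variable $\vec\phi$, so no further argument is needed.
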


\begin{proof}
From \eqref{6.15} we have ($k=1,2$)
\begin{align*}
&\partial_xW=2\partial_z\vert_{z=0}\partial_x\text{log}
\theta[z\vec{\Omega}_1+\vec{\phi}(x,m_1,m_2)+\vec{\eta}_-+\vec{K}]+a^{(1)}(0), \\
&T_kW-W=2\partial_z\vert_{z=0}\text{log}\frac{\theta[z\vec{\Omega}_1+T_k\vec{\phi}(x,m_1,m_2)+\vec{\eta}_-+\vec{K}]}
{\theta[z\vec{\Omega}_1+\vec{\phi}(x,m_1,m_2)+\vec{\eta}_-+\vec{K}]}-2[\gamma_k+\varphi_k(0)],
\end{align*}
which are equal to $-2uv$ and $-2u(T_kv)$ according to  equations (\ref{6.10}) and (\ref{4.18}), respectively.
Recalling Proposition \ref{P-2.3},  $W$ solves equation (\ref{1.12}).
\end{proof}

By similar analysis we have the following.

\begin{Proposition}\label{P-6.3}
The semi-discrete pKP equation (\ref{1.13}),  i.e. $\Xi^{(2,1)}=0$, has a solution
\begin{align}
W(x,y,m_1)=&2\partial_z\vert_{z=0}\mathrm{log}
\frac{\theta[z\vec{\Omega}_1+\vec{\phi}(x,y,m_1)+\vec{\eta}_-+\vec{K}]}
{\theta[z\vec{\Omega}_1+\vec{\phi}(0,0,0)+\vec{\eta}_-+\vec{K}]}\nonumber \\
&-2m_1[\gamma_1+\varphi_1(0)]+a^{(1)}(0)x+W(0,0,0),\label{6.16}
\end{align}
where
$\vec{\phi}(x,y,m_1)=x\vec{\Omega}_1+y\vec{\Omega}_2+m_1\vec{\Omega}_{\gamma_1}+\vec{\phi}(0,0,0)$.
\end{Proposition}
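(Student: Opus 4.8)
The plan is to mirror the structure of the proof of Proposition \ref{P-6.2}, adapting it to the case where the continuous flow stack now contains the two Hamiltonian phase flows $g_{H_1}^x$ and $g_{H_2}^y$ together with the single discrete map $S_{\gamma_1}$. First I would set up the composite flow
\[
(p(x,y,m_1),q(x,y,m_1))=g_{H_1}^x g_{H_2}^y S_{\gamma_1}^{m_1}(p_0,q_0),
\]
which is well defined because all three flows share the same involutive integrals $E_1,\dots,E_N$ (Lemma \ref{L-3.1}, Propositions \ref{P-3.2}, \ref{P-4.1}) and hence commute in pairs. Then, exactly as in \eqref{6.12}--\eqref{6.14}, the commutativity lets me present the $j$-th component $(p_j,q_j)^T$ as a solution of each constituent linear problem with $\lambda=\alpha_j$, so that after the gauge rescaling $\chi=(\alpha_j-\gamma_1)^{m_1/2}(p_j\,q_j)^T$ the overdetermined system consisting of \eqref{1.4}, \eqref{1.8a} (the $U_2$-problem, which corresponds to the $H_2$-flow with $y=\tau_2$ via the construction \eqref{g-flow}) and \eqref{2.2a} has a compatible solution $\chi$ for $\lambda=\alpha_j$. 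This is precisely the hypothesis $(i)$ of Proposition \ref{P-2.4}.

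Next I would verify hypothesis $(ii)$ of Proposition \ref{P-2.4}, namely that the $W$ defined by \eqref{6.16} solves the auxiliary system \eqref{2.3}a and \eqref{2.3}d. For \eqref{2.3}d I differentiate \eqref{6.16} in $x$: using $\vec\phi(x,y,m_1)=x\vec\Omega_1+y\vec\Omega_2+m_1\vec\Omega_{\gamma_1}+\vec\phi(0,0,0)$ and Proposition \ref{P-6.1}, the $x$-derivative of the log-theta term is $2\partial_z|_{z=0}\partial_x\log\theta[z\vec\Omega_1+\vec\phi(x,y,m_1)+\vec\eta_-+\vec K]$, which by \eqref{6.10} equals $-2uv+($the constant $a^{(1)}(0)$ subtracted off$)$; the constant cancels against the $a^{(1)}(0)x$ term, giving $\partial_xW=-2uv$. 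For \eqref{2.3}a I compute $T_1W-W$ from \eqref{6.16}; the $x$-dependent and $y$-dependent pieces (and $W(0,0,0)$) cancel, the $m_1$ shift contributes $-2[\gamma_1+\varphi_1(0)]$, and the theta quotient combines with this to match \eqref{4.18} with $\gamma=\gamma_1$, yielding $T_1W-W=-2u\tilde v$. With both hypotheses of Proposition \ref{P-2.4} in place, that proposition immediately gives $\Xi^{(2,1)}=0$.

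The one genuinely new point — the "main obstacle" — is justifying the $x$-derivative formula in the present three-variable setting, i.e.\ that Proposition \ref{P-6.1}'s identity \eqref{6.10} still applies verbatim. This requires that the Abel--Jacobi variable $\vec\phi$ depends on $x$ only through the linear term $x\vec\Omega_1$ even when the discrete shift $S_{\gamma_1}^{m_1}$ and the $y$-flow $g_{H_2}^y$ have already been applied; equivalently, that $\partial_x$ of the Baker function $\mathfrak h^{(2)}$ built from the composite flow reproduces \eqref{6.9}. This follows from the straightening-out results \eqref{3.15b}, \eqref{3.16c} and \eqref{4.10b}, together with the fact that the Lax matrix $L(\lambda;p,q)$, the curve $\mathcal R$, the divisor $\sum_j\mathfrak p(\nu_j)$, and the normalization \eqref{6.2} are common to all the flows — this is the "same Liouville platform" emphasized in the introduction. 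I would therefore spend most of the write-up checking that the reconstruction \eqref{6.6} of $\mathfrak h^{(2)}$, and hence the estimate \eqref{6.9b}, is unaffected by the extra arguments, after which the rest of the proof is the routine cancellation sketched above, entirely parallel to Proposition \ref{P-6.2}. A brief remark that the analogous verification for the full $\Xi^{(3,0)}$ case (four continuous arguments) proceeds identically would round out the section.
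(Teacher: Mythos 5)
Your proposal is correct and follows exactly the route the paper intends: the paper proves Proposition \ref{P-6.3} only by the remark ``by similar analysis,'' meaning precisely your construction of the composite flow $g_{H_1}^x g_{H_2}^y S_{\gamma_1}^{m_1}$ on the common Liouville platform, the compatibility of $(U_1,U_2,D^{(\gamma_1)})$ at $\lambda=\alpha_j$ after the rescaling by $(\alpha_j-\gamma_1)^{m_1/2}$, and the verification of the auxiliary relations (\ref{2.3}a,d) via (\ref{6.10}) and (\ref{4.18}) before invoking Proposition \ref{P-2.4}. No gaps; your extra care about why (\ref{6.10}) persists in the three-variable setting is consistent with, and slightly more explicit than, the paper's argument.
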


\begin{Proposition}\label{P-6.4}
The pKP equation (\ref{1.3}), i.e. $\Xi^{(3,0)}=0$, is solved by
\begin{align}
W(x,y,t)=&2\partial_z\vert_{z=0}\mathrm{log}\frac{\theta[z\vec{\Omega}_1+\vec{\phi}(x,y,t)+\vec{\eta}_-+\vec{K}]}
{\theta[z\vec{\Omega}_1+\vec{\phi}(0,0,0)+\vec{\eta}_-+\vec{K}]}\nonumber\\
&+a^{(1)}(0)x+W(0,0,0),\label{6.17}
\end{align}
where
$\vec{\phi}(x,y,t)=x\vec{\Omega}_1+y\vec{\Omega}_2+t\vec{\Omega}_3+\vec{\phi}(0,0,0)$.
\end{Proposition}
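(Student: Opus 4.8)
\textbf{Proof proposal for Proposition \ref{P-6.4}.}
The plan is to mirror the three-flow constructions used for Propositions \ref{P-5.1}, \ref{P-6.2} and \ref{P-6.3}, but now replacing every discrete symplectic map $S_{\gamma_k}$ by a continuous Hamiltonian phase flow, so that all three arguments $x,y,t$ become continuous. Concretely, I would form the composed flow
\[
(p(x,y,t),q(x,y,t))=g_{H_1}^x\,g_{H_2}^y\,g_{H_3}^t(p_0,q_0),
\]
which is well defined because $H_1,H_2,H_3$ are pairwise involutive by Lemma \ref{L-3.1} (they all lie in the commuting family $\{H_l\}$), hence the phase flows commute. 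Setting $(u,v)=(-<p,p>,<q,q>)$ as in \eqref{3.2}, the pair $(u,v)$ simultaneously satisfies the ZS-AKNS equation $(U_1)$ in the $x$-variable and, by the argument around \eqref{g-flow} and \cite{6-CaoWG-JMP-1999}, the isospectral flows $(X_2)$ in $y=\tau_2$ and $(X_3)$ in $t=\tau_3$, i.e. the NLS and mKdV members \eqref{1.6a}, \eqref{1.6b}. Consequently $(U_1,U_2,U_3)$ admits a common solution $\chi$ for $\lambda=\alpha_j$, and by the remark at the end of Section \ref{sec-2-1} the function $w=-2uv$ solves the KP equation \eqref{1.2}, so $W$ with $W_x=w$ solves the pKP equation \eqref{1.3}; this is $\Xi^{(3,0)}=0$.

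The second half is to identify the theta-function formula for $W$. Here I would invoke Proposition \ref{P-6.1}: since $(u,v)$ are finite genus potentials for \eqref{1.4}, formula \eqref{6.10} gives
\[
-2uv=2\partial_z\vert_{z=0}\partial_x\mathrm{log}\,\theta[\vec{\Omega}_1z+\vec{\phi}(x,y,t)+\vec{\eta}_-+\vec{K}]+a^{(1)}(0),
\]
provided the Abel-Jacobi vector $\vec\phi$ has the asserted linear dependence on $x,y,t$. That linearity is exactly Proposition \ref{P-3.4}: along the $H_1$-, $H_2$- and $H_3$-flows the velocities of $\vec\phi$ are the constant vectors $\vec\Omega_1,\vec\Omega_2,\vec\Omega_3$ from the expansion \eqref{3.9}, so $\vec\phi(x,y,t)\equiv x\vec\Omega_1+y\vec\Omega_2+t\vec\Omega_3+\vec\phi(0,0,0)\ (\mathrm{mod}\ \mathscr T)$, and $\theta$ is well defined on $J(\mathcal R)$. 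Integrating the displayed identity in $x$ and choosing the constant of integration to match $W(0,0,0)$ yields \eqref{6.17}; the $y$- and $t$-dependence enters only through $\vec\phi$, consistently, because \eqref{6.10} was derived without reference to the extra arguments.

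The main subtlety I expect is the consistency of the single $x$-integration with the $y$- and $t$-evolutions: \eqref{6.10} directly determines $\partial_x W=-2uv$ only, so one must check that the resulting $W$ also satisfies the companion relations $\partial_y W=-2\partial^{-1}_x(uv)_y$ and $\partial_t W=-2\partial^{-1}_x(uv)_t$ implicit in going from KP to pKP, i.e. that no spurious $x$-independent function of $(y,t)$ must be added. This follows because $(u,v)$ solves the full hierarchy $(X_1,X_2,X_3)$ simultaneously, so the pKP equation \eqref{1.3} holds as written for $W$ defined by $x$-integration with any fixed base point, but it is worth stating explicitly. A minor point is that $H_2,H_3$ here play the role that $S_{\gamma_1},S_{\gamma_2}$ (or $S_{\gamma_1}$ and $g_{H_2}^y$) played in Propositions \ref{P-6.2}--\ref{P-6.3}; the proof is therefore ``by similar analysis,'' and I would phrase it as such, citing Proposition \ref{P-6.1}, Proposition \ref{P-3.4}, and the KP$\to$pKP correspondence of Section \ref{sec-2-1}.
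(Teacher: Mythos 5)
Your proposal is correct and follows essentially the same route the paper intends: the paper gives no separate proof of Proposition \ref{P-6.4} (it is covered by the ``by similar analysis'' remark after Proposition \ref{P-6.2}), namely composing the commuting flows $g_{H_1}^x g_{H_2}^y g_{H_3}^t$, invoking the KP/pKP correspondence of Section \ref{sec-2-1}, and combining Proposition \ref{P-6.1} with the straightening result of Proposition \ref{P-3.4} to get \eqref{6.17}. Your explicit flagging of the $x$-integration ``constant'' issue is a reasonable extra precaution that the paper itself does not spell out.
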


\section{Concluding remarks}\label{sec-7}

In this paper we have shown that the lpKP equation, semi-discrete pKP equations
and  continuous pKP equation
can be derived as compatibilities of Lax triads that originate from the ZS-AKNS spectral problems.
The approach to constructing finite genus solutions for 2D lattice equations \cite{7-CaoX-JPA-2012,8-CaoZ-JPA-2012}
was extended to 3D cases. As a result, we obtained finite genus solutions for
the discrete, semi-discrete and continuous pKP equations.
Note that these solutions are different from the elliptic solitons that are genus-one solutions
obtained by Nijhoff, et al. in \cite{NijA-IMRN-2010,YooN-JMP-2013}.

In deriving those pKP equations, we employed the auxiliary relations \eqref{2.3}.
We note that usually $W$ in \eqref{2.3} can not be exactly solved out for all arbitrarily given $(u,v)$;
therefore it is hard to say when Lax triads provide strict integrability for 3D equations in some cases (cf.\cite{LevRS-CJP-1994}).
However, as for the case of finite genus solutions,  since the finite-dimensional integrable flows $g_{H_j}^{\tau_j}$ and $S_{\gamma_k}^{m_k}$  share same Liouville integrals, same Lax matrix  and  same algebraic curve,
it enables us to treat \eqref{2.3} on the same Liouville platform
and obtain explicit expressions for $W$ from \eqref{2.3} by algebro-geometric integration.

The lpKP equation is one of the five octahedron-type equations
with 4D consistency \cite{AdlBS-IMRN-2012}.
We believe our approach can be extended to other octahedron-type integrable equations.
This will be a part of our future work.

\vskip 8pt

\subsection*{Acknowledgments}

The authors are grateful to the referee for the invaluable comments.
This work is supported by the National Natural Science Foundation of
China (grant nos 10971200, 11501521, 11426206, 11631007 and 11875040 for the
three authors, respectively).

\vskip 8pt

\section*{Appendices}
\appendix
\section{An heuristic deduction of Ansatz (\ref{2.14})}\label{A-1}

 The Abel-Jacobi variable $\vec{\phi}$ in the Jacobian
variety $J(\mathcal{R})$ provides a favorable window to observe the
evolution of the discrete symplectic flow $S_{\gamma_k}^{m_k}$ as
well as the Hamiltonian flow $g_{H_j}^{\tau_j}$. The discrete
velocity $\vec{\Omega}_{\gamma_k}$ of $\vec{\phi}$ is given by
equation (\ref{4.10c}), while the continuous velocity $\vec{\Omega}_j$ is
calculated by equation (\ref{3.9}) and (\ref{3.15a}). They are bridged by the
normalized basis $\vec{\omega}$ of holomorphic differentials. Let
the parameter $\gamma=\gamma_k$ tend to be infinity in the way as
$\gamma_k=-1/\varepsilon_k$, with $\varepsilon_k=c_k\varepsilon$,
$\varepsilon\rightarrow 0$. By the local expression of $\vec{\omega}$ near $\infty_+$  given by equation (\ref{3.9}),
we have
\begin{equation*}
\vec{\Omega}_{\gamma_k}=\int_{\mathfrak{p}(\gamma_k)}^{\infty_+}\vec{\omega}=
\varepsilon_k\vec{\Omega}_1-\frac{\varepsilon_k^2}{2}\vec{\Omega}_2+\frac{\varepsilon_k^3}{3}\vec{\Omega}_3
+O(\varepsilon^4).
\end{equation*}
Substituting this into equation (\ref{5.7}), we obtain
\begin{equation*}
\vec{\phi}-\vec{\phi}_0=
\sum_{k=1}^3m_k\big(\varepsilon_k\vec{\Omega}_1
-\frac{\varepsilon_k^2}{2}\vec{\Omega}_2+\frac{\varepsilon_k^3}{3}\vec{\Omega}_3\big)
+O(\varepsilon^4).
\end{equation*}
On the other hand, by equation (\ref{6.17}), the 3D continuous evolution of $\vec{\phi}$ reads
\begin{equation*}
\vec{\phi}-\vec{\phi}_0=(x-x_0)\vec{\Omega}_1+(y-y_0)\vec{\Omega}_2+(t-t_0)\vec{\Omega}_3.
\end{equation*}
Thus, up to $O(\varepsilon^4)$, we have
\begin{equation*}\begin{split}
&x-x_0=\sum_{s=1}^3m_k\varepsilon_k,\quad y-y_0=-\sum_{s=1}^3m_k\frac{\varepsilon_k^2}{2},\quad t-t_0=\sum_{s=1}^3m_k\frac{\varepsilon_k^3}{3},\\
&T_kx=x+\varepsilon_k,\quad T_ky=y-\frac{\varepsilon_k^2}{2},\quad
T_kt=t+\frac{\varepsilon_k^3}{3}.
\end{split}\end{equation*}
By substituting them into  $T_kW=W(T_kx,T_ky,T_kt)$, we obtain Ansatz (\ref{2.14}).

\section{Continuum limit of the lNLS}\label{A-2}

The lNLS equation (\ref{2.4}), i.e. $\Xi^{(0,2)}=0$, is first obtained by
Konopelchenko \cite{15}. It is solved in \cite{8-CaoZ-JPA-2012}. At first glance, its
relation with the NLS equation (\ref{2.9}), $\Xi^{(2,0)}=0$, is not clear.
It turns out that there is a transformation of Nijhoff's type,
\begin{equation*}
u=(-\gamma_1)^{m_1}(-\gamma_2)^{m_2}u^{\prime},\quad
v=(-\gamma_1)^{-m_1}(-\gamma_2)^{-m_2}v^{\prime},
\end{equation*}
which reduces the lNLS equation into an equation of
$(u^{\prime},v^{\prime})$,
\begin{equation*}\begin{split}
&(\Xi^{\prime})^{(0,2)}_1\equiv(\gamma_1\gamma_2)^{-1}(\gamma_1\tilde{u}^{\prime}
-\gamma_2\bar{u}^{\prime})u^{\prime}\bar{\tilde{v}}^{\prime}
+\gamma_1(\tilde{u}^{\prime}-u^{\prime})-\gamma_2(\bar{u}^{\prime}-u^{\prime})=0,\\
&(\Xi^{\prime})^{(0,2)}_2\equiv(\gamma_1\gamma_2)^{-1}(\gamma_2\tilde{v}^{\prime}
-\gamma_1\bar{v}^{\prime})u^{\prime}\bar{\tilde{v}}^{\prime}
+\gamma_1(\tilde{\bar{v}}^{\prime}-\bar{v}^{\prime})-\gamma_2(\bar{\tilde{v}}^{\prime}-\tilde{v}^{\prime})=0.
\end{split}\end{equation*}
Let $-\gamma_k^{-1}=\varepsilon_k=c_k\varepsilon$, ( $k=1,2$ ),
where $c_1,c_2$ are distinct non-zero constants. For any smooth
functions $u^{\prime}(x,y)$, $v^{\prime}(x,y)$, define
\begin{equation*}\begin{split}
&\tilde{u}^{\prime}=u^{\prime}(x+\varepsilon_1,y-\varepsilon_1^2/2),\quad
\bar{u}^{\prime}=u^{\prime}(x+\varepsilon_2,y-\varepsilon_2^2/2),\\
&\bar{\tilde{u}}^{\prime}=u^{\prime}(x+\varepsilon_1+\varepsilon_2,y-\varepsilon_1^2/2-\varepsilon_2^2/2),
\end{split}\end{equation*}
and similar expressions for
$\tilde{v}^{\prime},\bar{v}^{\prime},\bar{\tilde{v}}^{\prime}$.
Then, as $\varepsilon\sim 0$, we have the following Taylor expansion
which confirms that the continuum limit of the lNLS is NLS up to a Nijhoff's type transformation,
\begin{equation*}
(\Xi^{\prime})^{(0,2)}={u_y^{\prime}-u_{xx}^{\prime}+2(u^{\prime})^2v^{\prime}
 \choose
 v_y^{\prime}+v_{xx}^{\prime}-2u^{\prime}(v^{\prime})^2}\frac{c_1-c_2}{2}\varepsilon+O(\varepsilon^2).
\end{equation*}

Similarly, the semi-discrete  NLS equation (\ref{2.11}),
$\Xi^{(1,1)}=0$, is transformed as
\begin{equation*}
(\Xi^{\prime})^{(1,1)}\equiv{u^{\prime}_x+\gamma_1(\tilde{u}^{\prime}-u^{\prime})
-\gamma_1^{-1}(u^{\prime})^2\tilde{v}^{\prime}
\choose
\tilde{v}^{\prime}_x+\gamma_1(\tilde{v}^{\prime}-v^{\prime})+\gamma_1^{-1}u^{\prime}(\tilde{v}^{\prime})^2}=0,
\end{equation*}
by the transformation of Nijhoff's type,
$u=(-\gamma_1)^{m_1}u^{\prime},\,v=(-\gamma_1)^{-m_1}v^{\prime}$ .
Let
\begin{equation*}\begin{split}
\tilde{u}^{\prime}=u^{\prime}(x+\varepsilon_1,y-\varepsilon_1^2/2),\quad
\tilde{v}^{\prime}=v^{\prime}(x+\varepsilon_1,y-\varepsilon_1^2/2).
\end{split}\end{equation*}
Then, as  $\varepsilon\sim 0$, we have the following Taylor
expansion, which confirms that the continuum limit of the
time-discrete NLS equation (\ref{2.11}) is the NLS equation up to a Nijhoff's type transformation,
\begin{equation*}
(\Xi^{\prime})^{(1,1)}={u_y^{\prime}-u_{xx}^{\prime}+2(u^{\prime})^2v^{\prime}
 \choose
 v_y^{\prime}+v_{xx}^{\prime}-2u^{\prime}(v^{\prime})^2}\frac{c_1}{2}\varepsilon+O(\varepsilon^2).
\end{equation*}

\vspace{0.5cm}


\begin{thebibliography}{90}

\bibitem{1-AblL-JMP-1975} Ablowitz M J and Ladik J
        1975 Nonlinear differential-difference equations
        {\it J. Math. Phys.} {\bf 16} 598-603

\bibitem{AblL-JMP-1976} Ablowitz M J and Ladik J
        1976 Nonlinear differential-difference equations and Fourier analysis
        {\it J. Math. Phys.} {\bf 17} 1011-1018

\bibitem{AdlBS-IMRN-2012} Adler V E, Bobenko A I and Suris Yu B
        2012 Classification of integrable discrete equations of octahedron type
        {\it Int. Math. Res. Not.} {\bf 2012} 1822-1889

\bibitem{AdlY-JPA-1994}  Adler V E and  Yamilov R I 1994
         Explicit auto-transformations of integrable chains
         \textit{J. Phys. A: Math. Gen.} \textbf{27} 477-492

\bibitem{4} Arnold V I and Novikov S P (Eds.) 1990 {\it Dynamical Systems IV} (Berlin: Springer)
\bibitem{5} Burchnall J L and Chaundy T W 1923
        Commutative ordinary differential operators
        {\it Proc. Lond. Math. Soc. Ser.2} {\bf 21} 420-440
\bibitem{5b}   Burchnall J L and Chaundy T W 1928
        Commutative ordinary differential operators
        {\it Proc. R. Soc. Lond. Ser. A} {\bf 118} 557-583

\bibitem{6-CaoWG-JMP-1999} Cao C W, Wu Y T and Geng X G
        1999 Relation between the Kadomtsev-Petviashvili equation and the confocal involutive system
        {\it J. Math. Phys.} {\bf 40} 3948-3970

\bibitem{7-CaoX-JPA-2012} Cao C W and Xu X X
        2012 A finite genus solution of the H1 model
        {\it J. Phys. A: Math. Theor.} {\bf 45} 055213(13pp)

\bibitem{8-CaoZ-JPA-2012} Cao C W and Zhang G Y
        2012 Integrable symplectic maps associated with the ZS-AKNS spectral problem
        {\it J. Phys. A: Math. Theor.} {\bf 45} 265201(15pp)



\bibitem{8b-CheDZ-JNMP-2017} Chen K, Deng X and Zhang D J 2017
        Symmetry constraint of the differential-difference KP hierarchy and a second discretization of the ZS-AKNS system
        \textit{J. Nonl. Math. Phys.} \textbf{24: sup 1} 18-35.

\bibitem{9-DatJM-JPSJ-1982} Date E, Jimbo M and Miwa T
        1982 Method for generating discrete soliton equations II
        {\it J. Phys. Soc. Japan} {\bf 51} 4125-4131

\bibitem{10} Farkas H M and Kra I 1992 {\it Riemann Surfaces} (New York: Springer)

\bibitem{11-FuHTZ-Non-2013} Fu W, Huang L, Tamizhmani K M and Zhang D J
        2013 Integrability properties of the differential-difference Kadomtsev-Petviashvili hierarchy and continuum limits
        {\it Nonlinearity} {\bf 26} 3197-229

\bibitem{12-GraKT-Spr-2004} Grammaticos B, Kosmann-Schwarzbach Y and Tamizmani T (Eds)
        2004 {\it Discrete Integrable Systems} (Berlin: Springer)

\bibitem{13-HieJN-Cam-2016}  Hietarinta J, Joshi N and Nijhoff F W
        2016 {\it Discrete Systems and Integrability} (Cambridge: Cambridge Univ. Press)

\bibitem{14-KanT-CSF-1997} Kanaga Vel S and Tamizhmani K M
        1997 Lax pairs, symmetries and conservation laws of a differential-difference equation-Sato's approach
        {\it Chaos Solitons Fractals} {\bf 8} 917-931

\bibitem{15} Konopelchenko B 1982 Elementary B\"{a}cklund transformations,
nonlinear superposition principle and solutions of the integrable
equations {\it Phys. Lett. A} {\bf 87} 445-448

\bibitem{16-KonSS-PLA-1991} Konopelchenko B, Sidorenko J and Strampp W 1991
        (1+1)-dimensional integrable systems as symmetry constraints of (2+1)-dimensional systems
        {\it Phys. Lett. A} {\bf 157} 17-21

\bibitem{LevRS-CJP-1994} Levi D, Ragnisco O and Shabat A B 1994
         Construction of higher local (2+1) dimensional exponential lattice equations,
         \textit{Can. J. Phys.} \textbf{72}  439-441

\bibitem{Mat-RTRSA-2008} Matveev V B 2008
         30 years of finite-gap integration theory,
         {\it Phil. Trans. R. Soc. A} textbf{366}  837-875

\bibitem{8a-MerRT-IP-1994} Merola I, Ragnisco O and Tu G Z 1994
        A novel hierarchy of integrable lattices,
        \textit{Inverse Problems} \textbf{10} 1315-1334

\bibitem{NijA-IMRN-2010} Nijhoff  F  W  and Atkinson  J  2010
         Elliptic $N$-soliton solutions of ABS lattice equations,
         \textit{Int. Math. Res. Not.} \textbf{2010} 3837-3895

\bibitem{17-NijCWQ-PLA-1984} Nijhoff F W, Capel H W, Wiersma G L and Quispel G R W
        1984 B\"{a}cklund transformations and three-dimensional lattice equations
        {\it Phys. Lett. A} {\bf 105 } 267-272

\bibitem{19-Sur-Bir-2003} Suris Yu B
        2003 {\it The Problem of Integrable Discretization: Hamiltonian Approach} (Basel: Birkh\"{a}user Verlag)

\bibitem{20} Toda M 1981 {\it Theory of Nonlinear Lattices} (Berlin: Springer)

\bibitem{YooN-JMP-2013} Yoo-Kong  S  and Nijhoff F  W  2013
       Elliptic $(N, N¡¯)$-soliton solutions of the lattice Kadomtsev-Petviashvili equation
       \textit{ J. Math. Phys.} \textbf{54} 043511(20pp)

\end{thebibliography}
\end{document}